\documentclass[aps,prd,preprint,nofootinbib,superscriptaddress,11pt]{revtex4-2}

\usepackage[T1]{fontenc}
\usepackage[utf8]{inputenc}
\usepackage{lmodern}

\usepackage{amsmath}
\usepackage{amssymb}
\usepackage{amsfonts}
\usepackage{amsthm}

\usepackage{mathtools}
\usepackage{mathrsfs}
\usepackage{bm}
\usepackage{tikz}

\newtheorem{theorem}{Theorem}
\usepackage{graphicx}
\usepackage{xcolor}

\usepackage{booktabs}
\usepackage{multirow}

\usepackage[
colorlinks=true,
linkcolor=blue,
citecolor=blue,
urlcolor=blue
]{hyperref}

\newcommand{\dd}{\mathrm d}

\newcommand{\ep}{\varepsilon}
\providecommand{\eth}{\text{\dh}}
\renewcommand{\eth}{\text{\dh}}

\newcommand{\ain}[1]{a^{\mathrm{in}}_{#1}}
\newcommand{\aind}[1]{a^{\dagger\,\mathrm{in}}_{#1}}
\newcommand{\aout}[1]{a^{\mathrm{out}}_{#1}}
\newcommand{\daout}[1]{\delta a^{\mathrm{out}}_{#1}}
\newcommand{\daoutd}[1]{\delta a^{\dagger\,\mathrm{out}}_{#1}}

\begin{document}

\title{Quantum graviton scattering with definite helicities\\
in the null surface formulation. II}
\author{C.~N.~Kozameh}
	\email{carlos.kozameh@unc.edu.ar}
	\affiliation{FaMAF, Universidad Nacional de C\'ordoba,
		5000 C\'ordoba, Argentina}
	
	\author{G.~O.~Depaola}
	\affiliation{FaMAF, Universidad Nacional de C\'ordoba,
		5000 C\'ordoba, Argentina}
	
	\date{\today}

\begin{abstract}
We derive the graviton scattering map in the null-surface formulation through third order in perturbations of Minkowski spacetime. The exact NSF equations are rational in the variables that determine the null surfaces, and this property allows a recursive perturbative construction of the cut function, the conformal factor, and the associated geometric sources. At third order, the relation between the future null-surface reconstruction and its antipodal related past counterpart determines the cubic correction $\delta a^{\mathrm{out}}_{3,\lambda}$ to the outgoing graviton operators for each helicity.

We then study the connected contribution obtained from the product
$\delta a^{\mathrm{out}}_3\delta a^{\mathrm{out}}_3$. Two Wick contractions connect the third-order operator corrections through internal on-shell graviton modes. The spatial delta functions generated by the NSF construction and by the Wick contractions combine to yield conservation of the total three-momentum of the scattering process. They also reduce the remaining internal integrations to a single independent three-momentum, showing that this contribution has one-loop topology.

The ultraviolet behavior is controlled by the normalization inherited from the radiative shear. Rewriting each radiative-mode integral in terms of the Lorentz-invariant momentum measure introduces the weight
$$
\nu(\omega)=8\pi^2\sqrt{4\pi G}\,\omega^{-3/2}.
$$
In the one-loop sector, the two internal modes therefore contribute a factor \(K^{-3}\) to each third-order vertex. Since the corresponding reduced geometric kernel grows at most linearly with (K), each fully normalized vertex scales as O\(K^{-2})\). The resulting radial ultraviolet tail is bounded by
$$
\int^\infty \frac{dK}{K^4},
$$
and is therefore convergent. Thus, the plane-wave kernel of the connected one-loop contribution is radially ultraviolet finite. Smooth wave packets are nevertheless required to define the physical domain of the operator-valued radiative distributions.

Finally, the perturbative operator corrections preserve Hermitian conjugation and are compatible, order by order, with a unitary Baker--Campbell--Hausdorff representation of the asymptotic scattering map.
\end{abstract}

\maketitle
\section{Introduction}
\label{sec:introduction}

Perturbative gravitational scattering is conventionally described in terms of
fields propagating on a fixed Minkowski background.  The asymptotic particles
are the free radiative modes of the linearized gravitational field, and the
scattering amplitudes are extracted from the perturbative $\mathcal S$-matrix.
This description reproduces the standard tree-level graviton amplitudes and
organizes quantum general relativity as an effective field theory.  At
sufficiently high loop order, the bulk expansion requires
higher-derivative counterterms, reflecting the ultraviolet limitations of the
Einstein theory as a fundamental perturbative quantum field theory
\cite{GS86,Donoghue2022}.

Modern on-shell methods reorganize the same scattering information directly
in terms of physical momenta, helicities, factorization channels, and analytic
properties \cite{BCFW2005,Bedford2005,BCJ2008,BCJ2010,Bern2019}.  These
methods provide a useful benchmark for a construction based entirely on
asymptotic radiative data.  The comparison made here is structural; no
identification with BCFW recursion, color--kinematics duality, or the double
copy is implied.

In general relativity the radiative degrees of freedom live naturally at null
infinity \cite{Bondi1962,Sachs1962,Ashtekar1987,Ashtekar2018}.  Their relation
to the interior geometry is encoded by null hypersurfaces.  The null-surface
formulation (NSF) makes this relation explicit: the spacetime geometry is
reconstructed from a family of cuts of null infinity,
\begin{equation}
Z(x^a,\zeta,\bar\zeta),
\label{eq:intro_cut_function}
\end{equation}
whose angular derivatives determine the associated optical variables and the
geometry of the null cones \cite{BKR2023,KZ2025}.  The Bondi shear at
$\mathscr I^\pm$ supplies the free radiative data entering the NSF field
equations.

The perturbative sector considered here consists of regular deformations of
the Minkowski cuts.  Classically, the radiative data are restricted so that
the reconstructed cuts remain smooth closed two-surfaces with topology
$S^2$, and the Bondi news has the decay required for finite radiated energy
\cite{KZ2025}.  After quantization, the shear modes are operator-valued
distributions.  Physical incoming and outgoing states are therefore
constructed by smearing the modes with smooth angular profiles and energy wave packets of rapid decay,
for example Schwartz-class profiles.  This state-space condition is imposed
from the outset and is not an ultraviolet regulator to be removed after the
calculation.

This work extends the perturbative quantum-scattering construction of
Ref.~\cite{KDI2026}, where the NSF operator map was shown to reproduce the
standard connected tree-level graviton kernels after radiative helicity
projection.  We determine the complete third-order outgoing correction,
identify the connected one-loop contribution generated by its square, and
analyze the ultraviolet behavior of the corresponding plane-wave kernel.

The exact NSF equations are rational in the variables that determine the null surfaces. This property allows us to construct a recursive perturbative hierarchy for the cut function, the conformal factor, and the geometric sources integrated along the null cones. The bookkeeping parameter $\epsilon$ labels the perturbative order and is set to one after the hierarchy has been identified. The scattering map is obtained by comparing the future reconstruction with its antipodally related past counterpart. We write
\begin{equation}
a^{\mathrm{out}}_{\lambda}
=
a^{\mathrm{in}}_{\lambda}
+
\delta a_{\lambda},
\qquad
\delta a_{\lambda}
=
\sum_{n\geq2}
\epsilon^n
\delta a_{n,\lambda},
\label{eq:intro_operator_expansion}
\end{equation}
where the absence of a first-order correction follows from the vanishing of the linear cone source and the resulting trivial identification of the free incoming and outgoing radiative modes. At each nonlinear order, the outgoing correction is obtained by applying the radiative helicity projection to the complete relation between the future and antipodal-past reconstructions.

At third order, $\delta a^{\mathrm{out}}_{3,\lambda}$ is cubic in the incoming radiative operators. Its connected contribution to four-graviton scattering has the form
\begin{equation}
\left\langle 0 \left|
\delta a^{\mathrm{out}}_{3,\lambda'_1}(p'_1)
\delta a^{\mathrm{out}}_{3,\lambda'_2}(p'_2)
a^{\mathrm{in}\dagger}_{\lambda_1}(p_1)
a^{\mathrm{in}\dagger}_{\lambda_2}(p_2)
\right| 0 \right\rangle_{\mathrm{conn}}.
\label{eq:intro_deltaa3_matrix_element}
\end{equation}
Each third-order insertion contains one outgoing external mode and three incoming radiative operators, and therefore acts as an effective quartic vertex. Two Wick contractions connect the two insertions through internal on-shell graviton modes. The spatial delta functions generated by the NSF vertices and by the Wick contractions combine to yield conservation of the total three-momentum of the scattering process. They also reduce the remaining integrations to one independent internal three-momentum. Thus, with two effective vertices and two internal contractions,
\begin{equation}
L
=
I-V+1
=
2-2+1
=
1,
\label{eq:intro_loop_counting}
\end{equation}
and this connected sector has one-loop topology.

The radial ultraviolet behavior is controlled by the normalization inherited from the radiative shear. Rewriting each radiative-mode integral in terms of the Lorentz-invariant momentum measure introduces the weight
\begin{equation}
\nu(\omega)
=
8\pi^2\sqrt{4\pi G}\, \omega^{-3/2}.
\label{eq:intro_mode_weight}
\end{equation}
In the one-loop routing, the two internal modes therefore contribute a factor $K^{-3}$ to each third-order vertex. Since the corresponding reduced geometric kernel grows at most linearly,
\begin{equation}
\mathcal V^{(3),\mathrm{geom}}
=
O(K),
\end{equation}
each fully normalized vertex scales as $O(K^{-2})$. The radial part of the on-shell phase-space measure is $O(dK)$, and the ultraviolet tail of the product of the two vertices is consequently bounded by
\begin{equation}
\int^\infty \frac{dK}{K^4},
\label{eq:intro_uv_integral}
\end{equation}
which is convergent. Thus, the plane-wave kernel of this connected one-loop sector is radially ultraviolet finite. Smooth wave packets are nevertheless required to define the physical domain of the operator-valued radiative distributions. This result applies to the one-loop sector considered here and does not constitute an all-orders ultraviolet-finiteness theorem.

The asymptotic operator map also preserves the adjoint operation,
\begin{equation}
(\delta a)^\dagger=\delta(a^\dagger),
\label{eq:intro_adjoint_preservation}
\end{equation}
and can be organized order by order through a Baker--Campbell--Hausdorff
representation
\begin{equation}
a^{\mathrm{out}}
=
\mathcal S^\dagger a^{\mathrm{in}}\mathcal S.
\label{eq:intro_unitary_map}
\end{equation}
Here this relation provides the perturbative unitarity framework: the
connected parts of $\delta a_n$ determine matrix elements of the corresponding
Hermitian BCH generators.  We do not use this formal
unitarity structure to infer ultraviolet finiteness; the latter follows from
the explicit normalization and large-$K$ behavior described above.

We use the Minkowski null cone to extract flat-space scattering kernels.  A
more intrinsic nonlinear NSF
treatment would involve a generalized wave operator containing perturbative
corrections to the flat d'Alembertian.  Thus the higher-order cuts
$Z_2,Z_3,\ldots$ should not be interpreted as free waves of $\Box_0$; only the
linear cut $Z_1$ has that direct flat-space interpretation.

The paper is organized as follows.
Section~\ref{sec:perturbative_hierarchy} presents the exact NSF field
equations in rational form and develops the perturbative hierarchy for the
cut function, the angular field, the conformal factor, and the nonlinear
cone sources.

Section~\ref{sec:asymptotic_quantization} introduces the asymptotic
quantization of the Bondi shear, fixes the normalization and canonical
commutation relations of the radiative modes, and constructs the incoming
and outgoing quantum cuts.

Section~\ref{sec:matching_and_delta_a} imposes the geometric matching
condition between the advanced and retarded NSF reconstructions. It
derives the higher-order outgoing operators, separates the recursive cut
and genuine cone contributions, and gives the complete helicity-resolved
third-order operator.

Section~\ref{sec:two_to_two_one_loop_uv} constructs the connected
\(2\to2\) matrix element generated by
\((\delta a_3^{\rm out})^2\), identifies its one-loop topology, and
establishes the radial ultraviolet convergence of the fully normalized
plane-wave kernel. The relation between this stronger plane-wave result
and the physical smeared state space is also discussed.

Section~\ref{sec:unitarity} examines the perturbative
Baker--Campbell--Hausdorff organization of the asymptotic operator map and
the corresponding reconstruction of its generators.

Section~\ref{sec:on_shell_asymptotic_connection} discusses structural
connections with modern on-shell methods, including null-momentum
variables, helicity organization, and recursive factorization, while
making clear the limits of these parallels.

Section~\ref{sec:discussion_and_conclusions} summarizes the results, states
their domain of validity, and outlines extensions beyond third order.

The appendices contain the technical details of the third-order construction.
Appendix~\ref{app:sign_hermitian_tables} fixes the branch-sign convention,
Hermitian pairing, Bose reduction, and radiative on-shell map.
Appendix~\ref{app:explicit_C3_kernels} lists the explicit paired-cone
third-order kernels, to be combined with the recursive-cut contribution before
the radiative and helicity projections.

\section{Perturbative hierarchy of the NSF equations}
\label{sec:perturbative_hierarchy}

This section summarizes the perturbative NSF field equations used below
\cite{BKR2023,KZ2025}.  We work in the future sector; the past-sector equations
follow from the advanced conformal construction and antipodal identification.

The cut function is expanded as
\begin{equation}
Z(x^a,\zeta,\bar{\zeta})
=
Z_0(x^a,\zeta,\bar{\zeta})
+
\sum_{n\geq 1} \epsilon^n Z_n(x^a,\zeta,\bar{\zeta}),
\label{eq:Z_expansion}
\end{equation}
where
\begin{equation}
Z_0(x^a,\zeta,\bar{\zeta})
=
x^a l_a(\zeta,\bar{\zeta})
\label{eq:flat_cut}
\end{equation}
represents the flat-space cut. The Bondi shear, the angular variable 
$\Lambda = \eth^2 Z$, and the conformal factor $\Omega$ are expanded 
analogously:
\begin{align}
\sigma
&=
\sum_{n\geq 1} \epsilon^n \sigma_n,
&
\bar{\sigma}
&=
\sum_{n\geq 1} \epsilon^n \bar{\sigma}_n,
\label{eq:sigma_expansion}
\\
\Lambda
&=
\sum_{n\geq 1} \epsilon^n \Lambda_n,
&
\Omega
&=
1 + \sum_{n\geq 2} \epsilon^n \Omega_n .
\label{eq:Lambda_Omega_expansion}
\end{align}
Note that there is no first-order correction to $\Omega$, because the 
governing equation for the conformal factor is quadratic in $\Lambda$ at 
leading order.

\subsection{Exact equations}
\label{subsec:exact_nsf_equations}

The exact NSF equation for the cut function has the schematic form
\begin{equation}
\bar{\eth}^2 \eth^2 Z
=
\eth^2 \bar{\sigma}
+
\bar{\eth}^2 \sigma
+
\Sigma^+
-
\int_r^\infty
\left[
\Omega^{-2} \bar{\eth}(\eth\Omega^2)
+
h^{ab} \partial_a\Lambda \partial_b\bar{\Lambda}
\right] dr' .
\label{eq:exact_Z_equation}
\end{equation}
Here the quantity
\begin{equation}
\Sigma^+(x,\zeta,\bar{\zeta})
=
\int_{-\infty}^{Z(x,\zeta,\bar{\zeta})}
\dot{\sigma}(u,\zeta,\bar{\zeta})
\dot{\bar{\sigma}}(u,\zeta,\bar{\zeta}) \, du
\label{eq:Sigma_definition}
\end{equation}
is the standard quadratic energy-flux term evaluated on the cut.

We rewrite the radial integral in terms of the affine parameter $s$ along the
null geodesic generators of the cone. Using the 
relation
\begin{equation}
dr = \Omega^2 ds ,
\label{eq:dr_Omega_ds}
\end{equation}
the radial part of Eq.~\eqref{eq:exact_Z_equation} becomes
\begin{equation}
- \int_s^\infty
\left[
\bar{\eth}(\eth\Omega^2)
+
g^{ab} \partial_a\Lambda \partial_b\bar{\Lambda}
\right] ds' .
\label{eq:cone_source_affine}
\end{equation}
At order $n$, we define the generator integral as
\begin{equation}
I_n(x,z')
\equiv
- \left[
\int_s^\infty
\left(
\bar{\eth}(\eth\Omega^2)
+
g^{ab} \partial_a\Lambda \partial_b\bar{\Lambda}
\right) ds'
\right]_n .
\label{eq:In_definition}
\end{equation}
The quantity $I_n(x,z')$ integrates the $n$-th order source along a 
single null geodesic generator labeled by the celestial direction $z'$. 
The complete null-cone contribution to the cut function is obtained by 
performing an angular inversion on the sphere:
\begin{equation}
Z_{n,\text{cone}}(x,z)
=
\oint d^2z'\,
G_{0,0'}(z,z') I_n(x,z') .
\label{eq:Zn_cone_from_In}
\end{equation}
Thus, $I_n$ represents a generator-specific integral, whereas $Z_{n,\text{cone}}$ 
denotes the corresponding full null-cone contribution to the cut function.

In this notation, the order-$n$ equation for $Z_n$ reads
\begin{equation}
\bar{\eth}^2 \eth^2 Z_n
=
\eth^2 \bar{\sigma}_n
+
\bar{\eth}^2 \sigma_n
+
\Sigma_n^+
+
I_n ,
\label{eq:Zn_hierarchy}
\end{equation}
where $\Sigma_n^+$ denotes the $n$-th order component of the flux $\Sigma^+$.

We decompose the cut function as
\begin{equation}
Z_n
=
Z_{n,\text{cut}}
+
Z_{n,\text{cone}},
\label{eq:Zn_cut_cone_split}
\end{equation}
with the individual components given by
\begin{equation}
Z_{n,\text{cut}}(x,z)
=
\oint d^2z'\,
G_{0,0'}(z,z')
\left[
\eth'^2 \bar{\sigma}_n
+
\bar{\eth}'^2 \sigma_n
+
\Sigma_n^+
\right](x,z')
\label{eq:Zn_cut_part}
\end{equation}
and
\begin{equation}
Z_{n,\text{cone}}(x,z)
=
\oint d^2z'\,
G_{0,0'}(z,z') I_n(x,z') .
\label{eq:Zn_cone_part}
\end{equation}
The cut part contains the radiative data and the local flux term on the cut, 
whereas the cone part encodes the purely radial contributions integrated 
along the null generators, followed by the angular inversion.

\subsection{The conformal factor equation}
\label{subsec:Omega_hierarchy}

The conformal factor $\Omega$ satisfies the radial differential equation
\begin{equation}
2\partial_r^2\Omega
=
R_{rr}[h] \Omega ,
\label{eq:Omega_exact_equation}
\end{equation}
where the radial Ricci component reads
\begin{equation}
R_{rr}[h]
=
\frac{1}{4q}
\partial_r^2\Lambda \partial_r^2\bar{\Lambda}
+
\frac{3}{8q^2} (\partial_r q)^2
-
\frac{1}{4q} \partial_r^2 q ,
\label{eq:Rrr_exact}
\end{equation}
and we define the auxiliary metric function
\begin{equation}
q
=
1 - \partial_r\Lambda \partial_r\bar{\Lambda} .
\label{eq:q_definition}
\end{equation}
Since $R_{rr}$ is a rational function of $q$, its perturbative expansion 
is defined around the flat-space limit $q=1$.

We introduce the compact notation
\begin{align}
P_n
&\equiv
\sum_{i+j=n} \partial_r\Lambda_i \partial_r\bar{\Lambda}_j,
\label{eq:Pn_definition}
\\
A_n
&\equiv
\sum_{i+j=n} \partial_r^2\Lambda_i \partial_r^2\bar{\Lambda}_j ,
\label{eq:An_definition}
\end{align}
which allows us to expand $q$ as
\begin{equation}
q
=
1 - \sum_{n\geq 2} \epsilon^n P_n .
\label{eq:q_expansion}
\end{equation}

At second order, only the quadratic terms in the first-order fields contribute. 
The differential equation for $\Omega_2$ is
\begin{equation}
2\partial_r^2\Omega_2
=
\frac{1}{4}
\left(
A_2 + \partial_r^2 P_2
\right),
\label{eq:Omega2_differential}
\end{equation}
where
\begin{equation}
P_2
=
\partial_r\Lambda_1 \partial_r\bar{\Lambda}_1,
\qquad
A_2
=
\partial_r^2\Lambda_1 \partial_r^2\bar{\Lambda}_1 .
\label{eq:P2_A2_definition}
\end{equation}
Imposing standard boundary conditions at null infinity, the integrated form is
\begin{equation}
\Omega_2(r)
=
\frac{1}{8} P_2(r)
+
\frac{1}{8}
\int_r^\infty dr'
\int_{r'}^\infty dr''\,
A_2(r'') .
\label{eq:Omega2_integrated}
\end{equation}

At third order, the conformal-factor equation is
\begin{equation}
2\partial_r^2\Omega_3
=
\frac{1}{4}
\left(
A_3 + \partial_r^2 P_3
\right),
\label{eq:Omega3_differential}
\end{equation}
where the bilinear sources are
\begin{align}
P_3
&=
\partial_r\Lambda_1 \partial_r\bar{\Lambda}_2
+
\partial_r\Lambda_2 \partial_r\bar{\Lambda}_1,
\label{eq:P3_definition}
\\
A_3
&=
\partial_r^2\Lambda_1 \partial_r^2\bar{\Lambda}_2
+
\partial_r^2\Lambda_2 \partial_r^2\bar{\Lambda}_1 .
\label{eq:A3_definition}
\end{align}
No term proportional to $\Omega_2 (A_2 + \partial_r^2 P_2)$ appears at 
this order; such products are of fourth order and enter at the next level of the hierarchy.

At fourth order, the differential equation reads
\begin{align}
\begin{aligned}
2\partial_r^2\Omega_4
=
\frac{1}{4}
\left(
A_4 + \partial_r^2 P_4
\right)
& +
\frac{1}{4} P_2
\left(
A_2 + \partial_r^2 P_2
\right)
+
\frac{3}{8}
(\partial_r P_2)^2
\\
& +
\frac{1}{4}
\left(
A_2 + \partial_r^2 P_2
\right) \Omega_2 .
\end{aligned}
\label{eq:Omega4_differential}
\end{align}
Here the fourth-order sources are
\begin{align}
P_4
&=
\partial_r\Lambda_1 \partial_r\bar{\Lambda}_3
+
\partial_r\Lambda_2 \partial_r\bar{\Lambda}_2
+
\partial_r\Lambda_3 \partial_r\bar{\Lambda}_1,
\label{eq:P4_definition}
\\
A_4
&=
\partial_r^2\Lambda_1 \partial_r^2\bar{\Lambda}_3
+
\partial_r^2\Lambda_2 \partial_r^2\bar{\Lambda}_2
+
\partial_r^2\Lambda_3 \partial_r^2\bar{\Lambda}_1 .
\label{eq:A4_definition}
\end{align}
The final line in Eq.~\eqref{eq:Omega4_differential} originates from the 
product $R_{rr}^{(2)}\Omega_2$, constituting a genuinely fourth-order backreaction.

The integrated form at any order $n$ may be written compactly as
\begin{equation}
\Omega_n(r)
=
\frac{1}{2}
\int_r^\infty dr'
\int_{r'}^\infty dr''\,
\left[R_{rr} \Omega\right]_n(r'') ,
\label{eq:Omega_n_integrated_general}
\end{equation}
evaluated with the order $n=2,3,4$ sources defined above.

\subsection{Order-by-order cut equation}
\label{subsec:order_by_order_cut_equation}

Combining the cut equation with our cone-source notation, the first few 
perturbative orders take the following explicit forms.

At first order, the cone contribution vanishes identically:
\begin{equation}
I_1 = 0,
\label{eq:I1_zero}
\end{equation}
which yields
\begin{equation}
\bar{\eth}^2 \eth^2 Z_1
=
\eth^2 \bar{\sigma}_1
+
\bar{\eth}^2 \sigma_1 .
\label{eq:Z1_equation}
\end{equation}
This is the standard linear radiative relation between the cut and the free shear.

At second order, we have
\begin{equation}
\bar{\eth}^2 \eth^2 Z_2
=
\eth^2 \bar{\sigma}_2
+
\bar{\eth}^2 \sigma_2
+
\Sigma_2^+
+
I_2 ,
\label{eq:Z2_equation}
\end{equation}
where the flux term is
\begin{equation}
\Sigma_2^+
=
\int_{-\infty}^{Z_0}
\dot{\sigma}_1 \dot{\bar{\sigma}}_1 \, du .
\label{eq:Sigma2_definition}
\end{equation}
The term $I_2$ represents the second-order generator integral produced by 
$\Omega_2$ and the quadratic combination $g^{ab} \partial_a\Lambda_1 \partial_b\bar{\Lambda}_1$.

At third order, the equation is
\begin{equation}
\bar{\eth}^2 \eth^2 Z_3
=
\eth^2 \bar{\sigma}_3
+
\bar{\eth}^2 \sigma_3
+
\Sigma_3^+
+
I_3 .
\label{eq:Z3_equation}
\end{equation}
The source $I_3$ contains the connected third-order cone contribution, built 
from $\Omega_3$ and the cross-terms
\begin{equation*}
\partial_a\Lambda_1 \partial_b\bar{\Lambda}_2
+
\partial_a\Lambda_2 \partial_b\bar{\Lambda}_1 .
\end{equation*}
This is the component that subsequently generates the connected cubic correction 
$\delta a^\text{out}_3$ after helicity projection.

At fourth order, the cut equation is
\begin{equation}
\bar{\eth}^2 \eth^2 Z_4
=
\eth^2 \bar{\sigma}_4
+
\bar{\eth}^2 \sigma_4
+
\Sigma_4^+
+
I_4 .
\label{eq:Z4_equation}
\end{equation}
The source $I_4$ contains the fourth-order cone contributions, including backreaction 
products involving $\Omega_4$, the quadratic combinations $\Lambda_1\Lambda_3$ and 
$\Lambda_2\Lambda_2$, and the higher-order corrections inherited from the rational 
expansion of $R_{rr}$.

\subsection{Role in the scattering construction}
\label{subsec:role_in_scattering_construction}

The hierarchy separates the free radiative data on the cut from nonlinear
propagation through the interior of the null cone. 
The asymptotic free data enter directly via $\sigma_n$ and $\bar{\sigma}_n$. 
Conversely, the nonlinear radial propagation along the null generators is 
encoded in the integrals $I_n$. The physical cone contribution to the cut 
function is not $I_n$ itself, but its angularly inverted counterpart:
\begin{equation}
Z_{n,\text{cone}}(x,z)
=
\oint d^2z'\,
G_{0,0'}(z,z') I_n(x,z') .
\end{equation}

This distinction is required in the quantization procedure. The third-order 
connected scattering kernel is extracted from the connected part of $I_3$, 
followed by angular inversion and radiative helicity extraction.  The
subsequent sections use this procedure to obtain the reduced cubic vertices
$\mathcal{V}^{(3),\lambda,\mathrm{geom}}_{\eta_2\eta_3}$.  The
vertices entering the operator and the scattering amplitude are the
mode-normalized quantities $\mathcal W^{(3)}$, defined below.

\section{Asymptotic quantization and quantum cuts}
\label{sec:asymptotic_quantization}
The perturbative NSF hierarchy expresses the spacetime geometry as a
functional of the radiative data at null infinity. We therefore begin by
recalling the asymptotic quantization of the Bondi shear and then apply it
to the cut functions entering the NSF reconstruction.

To simplify the notation, $z$ denotes a point on the celestial sphere,
and $z'$ denotes the angular integration variable. The conjugate
stereographic coordinates will not be displayed explicitly. We also use
the spherical parametrization of a future-directed null momentum,
\begin{equation}
k = (\omega, z),
\qquad
k^a = \omega l^a(z),
\qquad
\omega > 0.
\label{eq:spherical_null_momentum_quantization}
\end{equation}
Following Ashtekar's asymptotic quantization \cite{Ashtekar1987}, the first-order shear at
$\mathscr{I}^+$ is promoted to the operator
\begin{align}
\widehat{\sigma}_1^+(u,z)
={}&
\int_0^\infty
\frac{d\omega}{2\pi}
\sqrt{\frac{4\pi G}{\omega}}
\left[
a_{1,+}^{\rm in}(\omega,z) e^{-i\omega u}
+
a_{1,-}^{{\rm in}\dagger}(\omega,z) e^{+i\omega u}
\right],
\label{eq:quantum_sigma_one} \\
\widehat{\sigma}_1^{+\dagger}(u,z)
={}&
\int_0^\infty
\frac{d\omega}{2\pi}
\sqrt{\frac{4\pi G}{\omega}}
\left[
a_{1,-}^{\rm in}(\omega,z) e^{-i\omega u}
+
a_{1,+}^{{\rm in}\dagger}(\omega,z) e^{+i\omega u}
\right].
\label{eq:quantum_sigma_one_dagger}
\end{align}
Thus $a_{1,+}^{\rm in}$ and $a_{1,-}^{\rm in}$ annihilate
incoming gravitons of helicity $+2$ and $-2$, respectively.
The nonvanishing canonical commutator is
\begin{equation}
\left[
a_{1,\lambda}^{\rm in}(\vec{k}),
a_{1,\lambda'}^{{\rm in}\dagger}(\vec{k}')
\right]
=
\omega_k
\delta_{\lambda\lambda'}
\delta^{(3)}(\vec{k}-\vec{k}'),
\qquad
\omega_k = |\vec{k}|,
\label{eq:asymptotic_quantization_commutator}
\end{equation}
and all commutators between two annihilation operators or two creation
operators vanish.

The perturbative parameter $\epsilon$ is not part of the canonical
normalization. It only keeps track of the amplitude order. The physical
incoming shear is therefore
\begin{equation}
\widehat{\sigma}_{\rm in}^+
=
\epsilon\widehat{\sigma}_1^+,
\qquad
\widehat{\sigma}_{\rm in}^{+\dagger}
=
\epsilon\widehat{\sigma}_1^{+\dagger}.
\label{eq:physical_incoming_shear}
\end{equation}
The same helicity decomposition defines the outgoing shear:
\begin{align}
\widehat{\sigma}_{\rm out}^+(u,z)
={}&
\int_0^\infty
\frac{d\omega}{2\pi}
\sqrt{\frac{4\pi G}{\omega}}
\left[
a_+^{\rm out}(\omega,z) e^{-i\omega u}
+
a_-^{{\rm out}\dagger}(\omega,z) e^{+i\omega u}
\right],
\label{eq:quantum_outgoing_shear} \\
\widehat{\sigma}_{\rm out}^{+\dagger}(u,z)
={}&
\int_0^\infty
\frac{d\omega}{2\pi}
\sqrt{\frac{4\pi G}{\omega}}
\left[
a_-^{\rm out}(\omega,z) e^{-i\omega u}
+
a_+^{{\rm out}\dagger}(\omega,z) e^{+i\omega u}
\right].
\label{eq:quantum_outgoing_shear_dagger}
\end{align}
The distinction between the incoming and outgoing operators is dynamical.
At linear order they are identified by the antipodal matching condition,
whereas their nonlinear difference is determined by the higher-order NSF
equations.

\subsection{Quantum cuts}
\label{subsec:quantum_cuts}
We apply the asymptotic quantization prescription to the cut part of the NSF
solution. The construction is first given without making any
perturbative expansion.

\paragraph{Exact outgoing cut.}
At the classical level, the outgoing cut contribution is
\begin{align}
Z_{\rm cut}^{\rm out}(x,z)
={}&
\oint d^2z'
\Big[
G_{0,-2'}
\sigma_{\rm out}^+
\bigl(Z^{\rm out}(x,z'),z'\bigr)
\nonumber \\
&\hspace{2.4cm}
+
G_{0,2'}
\sigma_{\rm out}^{+*}
\bigl(Z^{\rm out}(x,z'),z'\bigr)
\Big].
\label{eq:exact_outgoing_cut_shear_form}
\end{align}
The angular arguments $(z,z')$ of the Green functions are understood.
The function $Z^{\rm out}(x,z')$ appearing in the argument of the shear
is the complete outgoing cut.

Using the Fourier expansion of the outgoing shear,
Eq.~\eqref{eq:exact_outgoing_cut_shear_form} becomes
\begin{align}
Z_{\rm cut}^{\rm out}(x,z)
={}&
\oint d^2z'
\int_0^\infty
\frac{d\omega}{2\pi}
\sqrt{\frac{4\pi G}{\omega}}
\Bigg\{
e^{-i\omega Z^{\rm out}(x,z')}
\Big[
G_{0,-2'} a_+^{\rm out}(k)
+
G_{0,2'} a_-^{\rm out}(k)
\Big]
\nonumber \\
&\hspace{2.1cm}
+
e^{+i\omega Z^{\rm out}(x,z')}
\Big[
G_{0,-2'} a_-^{\rm out\dagger}(k)
+
G_{0,2'} a_+^{\rm out\dagger}(k)
\Big]
\Bigg\},
\qquad
k = (\omega,z').
\label{eq:exact_classical_outgoing_cut}
\end{align}
No expansion of $Z^{\rm out}$ or of the outgoing amplitudes has been
made in this equation.

We denote the complete outgoing quantum cut by
$\widehat{Z}_{\rm cut}^{\rm out}$ and separate its flat Minkowski part
from its radiative contribution according to
\begin{equation}
\widehat{Z}_{\rm cut}^{\rm out}(x,z')
=
x^a l_a(z') \mathbf{1}
+
\widehat{Z}_{\rm rad}^{\rm out}(x,z').
\label{eq:outgoing_quantum_cut_split}
\end{equation}
The operator $\widehat{Z}_{\rm rad}^{\rm out}$ contains the complete
radiative deformation of the outgoing cut. It is constructed from the
asymptotic on-shell operators
$a_\lambda^{\rm out}$ and
$a_\lambda^{\rm out\dagger}$, although this functional dependence
will be left implicit. Since the complete cut is real and the flat
Minkowski contribution is real, the radiative cut operator is
self-adjoint,
\begin{equation}
\widehat{Z}_{\rm rad}^{\rm out\dagger}
=
\widehat{Z}_{\rm rad}^{\rm out}.
\label{eq:outgoing_radiative_cut_hermiticity}
\end{equation}
Because $x^a l_a(z')$ is a $c$-number, the Fourier phase factorizes
exactly:
\begin{equation}
e^{-i\omega\widehat{Z}_{\rm cut}^{\rm out}(x,z')}
=
e^{-i\omega x\cdot l'}
\widehat{U}_\omega^{\rm out}(x,z'),
\label{eq:outgoing_quantum_phase_factorization}
\end{equation}
where
\begin{equation}
\widehat{U}_\omega^{\rm out}(x,z')
\equiv
\exp\left[
-i\omega\widehat{Z}_{\rm rad}^{\rm out}(x,z')
\right].
\label{eq:outgoing_unitary_definition}
\end{equation}
The self-adjointness of
$\widehat{Z}_{\rm rad}^{\rm out}$ implies
\begin{equation}
\widehat{U}_\omega^{\rm out\dagger}
\widehat{U}_\omega^{\rm out}
=
\widehat{U}_\omega^{\rm out}
\widehat{U}_\omega^{\rm out\dagger}
=
\mathbf{1}.
\label{eq:outgoing_unitary_identity}
\end{equation}

The quantum outgoing cut is defined by promoting each complete Fourier
product together with its Hermitian conjugate:
\begin{align}
\widehat{Z}_{\rm cut}^{\rm out}(x,z)
={}&
\oint d^2z'
\int_0^\infty
\frac{d\omega}{2\pi}
\sqrt{\frac{4\pi G}{\omega}}
:
\Bigg\{
e^{-i\omega x\cdot l'}
\Big[
G_{0,-2'} a_+^{\rm out}(k)
+
G_{0,2'} a_-^{\rm out}(k)
\Big]
\widehat{U}_\omega^{\rm out}
\nonumber \\
&\hspace{1.7cm}
+
e^{+i\omega x\cdot l'}
\widehat{U}_\omega^{{\rm out}\dagger}
\Big[
G_{0,-2'} a_-^{{\rm out}\dagger}(k)
+
G_{0,2'} a_+^{{\rm out}\dagger}(k)
\Big]
\Bigg\}
: .
\label{eq:exact_quantum_outgoing_cut}
\end{align}
The arguments $(x,z')$ of the unitary operators are understood.
The operator ordering in
Eq.~\eqref{eq:exact_quantum_outgoing_cut} follows from
\begin{equation}
\left(
a_\lambda^{\rm out}\widehat{U}_\omega^{\rm out}
\right)^\dagger
=
\widehat{U}_\omega^{{\rm out}\dagger}
a_\lambda^{{\rm out}\dagger}.
\label{eq:Hermitian_ordering_outgoing_cut}
\end{equation}
Normal ordering is applied only after the complete annihilation term and
its Hermitian conjugate have been assembled. It is part of the definition
of the quantum cut.

\paragraph{Incoming cut.}
The incoming cut is quantized by the same prescription. After applying
the antipodal transformation, its flat phase can be written in the same
form as the outgoing one. Its deformation is determined entirely by the
linear incoming radiative data:
\begin{equation}
\widehat{\mathcal{Z}}^{\rm in}
=
\epsilon\widehat{Z}_1,
\qquad
\widehat{Z}_1^\dagger = \widehat{Z}_1.
\label{eq:incoming_linear_cut_deformation}
\end{equation}
We therefore define
\begin{equation}
\widehat{U}_\omega^{\rm in}
\equiv
\exp
\left[
-i\omega\epsilon\widehat{Z}_1
\right].
\label{eq:incoming_unitary_definition}
\end{equation}
This operator contains all powers of the linear incoming cut but no
independent nonlinear incoming deformation.

The antipodally transformed incoming cut has the operator form
\begin{align}
\mathcal{A}\widehat{Z}_{\rm cut}^{\rm in}(x,z)
={}&
\oint d^2z'
\int_0^\infty
\frac{d\omega}{2\pi}
\sqrt{\frac{4\pi G}{\omega}}
:
\Bigg\{
e^{-i\omega x\cdot l'}
\Big[
G_{0,-2'}\epsilon a_{1,+}^{\rm in}(k)
+
G_{0,2'}\epsilon a_{1,-}^{\rm in}(k)
\Big]
\widehat{U}_\omega^{\rm in}
\nonumber \\
&\hspace{1.7cm}
+
e^{+i\omega x\cdot l'}
\widehat{U}_\omega^{{\rm in}\dagger}
\Big[
G_{0,-2'}\epsilon a_{1,-}^{{\rm in}\dagger}(k)
+
G_{0,2'}\epsilon a_{1,+}^{{\rm in}\dagger}(k)
\Big]
\Bigg\}
: .
\label{eq:exact_quantum_incoming_cut}
\end{align}
Here $\mathcal{A}$ denotes the antipodal transformation, including the
phase convention required by the linear matching relation.

\paragraph{Exact matched quantum cut.}
The antipodal matching of the spin-weighted radiative data implies that
the phase operator multiplying the outgoing modes is the same operator
that multiplies the antipodally mapped incoming modes. We therefore write
\begin{equation}
\boxed{
\widehat U_\omega^{\rm out}[Z_{\rm out}]
=
\widehat U_\omega^{\rm in}[Z_1]
\equiv
\widehat U_\omega[Z_1],
}
\label{eq:common_exact_cut_unitary}
\end{equation}
where
\begin{equation}
\widehat U_\omega[Z_1](x,z')
\equiv
\exp\!\left[-i\omega\epsilon\widehat Z_1(x,z')\right],
\qquad
\widehat U_\omega^\dagger[Z_1]\widehat U_\omega[Z_1]=\mathbf 1.
\label{eq:common_exact_cut_unitary_definition}
\end{equation}
The equality in Eq.~\eqref{eq:common_exact_cut_unitary} is the quantum
counterpart of the classical antipodal relation between
$\sigma^{+}$ and $\bar\sigma^{-}$. In particular,
$\mathcal A\bar\sigma^{-}$ has the same spin weight as $\sigma^{+}$,
so the positive-frequency branches of the two cuts carry the same
operator $\widehat U_\omega[Z_1]$; the adjoint occurs only in the
negative-frequency branch.

We define the exact change of the asymptotic operators by
\begin{equation}
\boxed{
\delta a_\lambda(k)
\equiv
a_\lambda^{\rm out}(k)-a_\lambda^{\rm in}(k).
}
\label{eq:exact_delta_a_definition}
\end{equation}
The linear matching proved in Appendix~A of the first paper gives
$\delta a_{1,\lambda}=0$. Hence the first nonvanishing term is of
second order,
\begin{equation}
\delta a_\lambda(k)
=
\epsilon^2\delta a_{2,\lambda}^{\rm out}(k)
+
\epsilon^3\delta a_{3,\lambda}^{\rm out}(k)
+
\mathcal O(\epsilon^4),
\label{eq:delta_a_expansion_from_second_order}
\end{equation}
while $a_\lambda^{\rm in}=\epsilon a_{1,\lambda}^{\rm in}$ in the
absolute-order convention used here.

Using Eqs.~\eqref{eq:exact_quantum_outgoing_cut},
\eqref{eq:exact_quantum_incoming_cut}, and
\eqref{eq:common_exact_cut_unitary}, the exact sum of the outgoing and
antipodally mapped incoming cuts is
\begin{equation}
\boxed{
\begin{aligned}
\Delta\widehat Z_{\rm cut}(x,z)
={}&
\oint d^2z'\int_0^\infty
\frac{d\omega}{2\pi}
\sqrt{\frac{4\pi G}{\omega}}
:\!\Bigg\{
 e^{-i\omega x\cdot l'}
 \Big[
 G_{0,-2'}(z,z')\,\delta a_+(\omega,z')
 +
 G_{0,2'}(z,z')\,\delta a_-(\omega,z')
 \Big]
 \widehat U_\omega[Z_1](x,z')
 +\mathrm{h.c.}
\Bigg\}\!:\, .
\end{aligned}
}
\label{eq:exact_matched_quantum_cut_delta_a}
\end{equation}
Here the Hermitian conjugate acts on the complete preceding
positive-frequency term and therefore reverses the operator order. The
flat Minkowski phase $e^{-i\omega x\cdot l'}$ has been taken outside the
unitary operator, since $x\cdot l'$ is a $c$-number. Equation
\eqref{eq:exact_matched_quantum_cut_delta_a} contains the complete cut
contribution: there is no additional nonlinear cut term outside
$\widehat U_\omega[Z_1]$.

If the cone contribution vanishes, the exact matching equation reduces
to $\Delta\widehat Z_{\rm cut}=0$. Its radiative projection then gives
$\delta a_\lambda=0$, or $a_\lambda^{\rm out}=a_\lambda^{\rm in}$.
When the cone is nonzero, the same cut expression is one side of the
full cut--cone equation and determines $\delta a_\lambda$ recursively.

For later use, expansion of the single exact product gives
\begin{align}
\delta a_\lambda\widehat U_\omega[Z_1]
={}&
\epsilon^2\delta a_{2,\lambda}^{\rm out}
\nonumber\\
&+
\epsilon^3\left(
\delta a_{3,\lambda}^{\rm out}
-i\omega\delta a_{2,\lambda}^{\rm out}\widehat Z_1
\right)
+
\mathcal O(\epsilon^4).
\label{eq:delta_a_U_expansion_through_third}
\end{align}
Thus every recursive cut term is generated by expanding
$\delta a_\lambda\widehat U_\omega[Z_1]$; no independent
$\delta\widehat Z_2$ contribution is required.

\section{Matching condition and higher-order outgoing operators}
\label{sec:matching_and_delta_a}
The preceding section constructed the incoming and outgoing quantum cuts
directly from the asymptotic radiative operators. We next incorporate the
nonlinear propagation along the null cones and impose the condition that
the advanced and retarded NSF reconstructions describe the same physical
spacetime geometry.

\subsection{Geometric origin of the matching condition}
\label{subsec:geometric_quantum_matching}
The matching condition does not originate from an independent
identification of the incoming and outgoing cuts. Its geometrical origin
is the requirement that the advanced and retarded NSF reconstructions
determine the same spacetime metric.

The outgoing radiative data on $\mathscr{I}^+$ determine the advanced
reconstruction,
\begin{equation}
g_{ab}^{\rm adv}
=
g_{ab}^{\rm adv}
\left[
Z_{\rm total}^{\rm out}
\right],
\label{eq:advanced_metric_from_total_cut}
\end{equation}
whereas the incoming radiative data on $\mathscr{I}^-$ determine the
retarded reconstruction,
\begin{equation}
g_{ab}^{\rm ret}
=
g_{ab}^{\rm ret}
\left[
Z_{\rm total}^{\rm in}
\right].
\label{eq:retarded_metric_from_total_cut}
\end{equation}
The terminology ``advanced'' and ``retarded'' refers here to the
Green-function reconstruction of the interior metric. Thus the data on
$\mathscr{I}^+$, although written in terms of the retarded Bondi time
$u$, determine the advanced solution, while the data on
$\mathscr{I}^-$, written in terms of the advanced Bondi time $v$,
determine the retarded solution.

Before the two reconstructions can be compared, the angular variables
of $\mathscr{I}^-$ must be mapped to those of $\mathscr{I}^+$. We
denote this antipodal transformation by $\mathcal{A}$:
\begin{equation}
\mathcal{A} Z_{\rm total}^{\rm in}(x,z)
\equiv
Z_{\rm total}^{\rm in}(x,\widehat{z}),
\label{eq:antipodal_incoming_total_cut}
\end{equation}
where $\widehat{z}$ is the antipodal point of $z$. When
$\mathcal{A}$ acts on the radiative modes contained in the cut, the
corresponding helicity and phase conventions are understood.

The fundamental geometrical matching condition is
\begin{equation}
\boxed{
g_{ab}^{\rm adv}
\left[
Z_{\rm total}^{\rm out}
\right]
=
g_{ab}^{\rm ret}
\left[
\mathcal{A} Z_{\rm total}^{\rm in}
\right].
}
\label{eq:advanced_retarded_metric_matching}
\end{equation}
The advanced and retarded solutions are therefore not two independent
metrics, but two reconstructions of the same spacetime geometry from
future and past asymptotic data.

Within the perturbative NSF reconstruction, the null-null components
of the two metric perturbations satisfy
\begin{align}
h_{ab}^{\rm adv}(x)l^a(z)l^b(z)
&=
-2l^a(z)\partial_a Z_{\rm total}^{\rm out}(x,z),
\label{eq:advanced_metric_cut_relation} \\
h_{ab}^{\rm ret}(x)l^a(z)l^b(z)
&=
2l^a(z)\partial_a \mathcal{A} Z_{\rm total}^{\rm in}(x,z).
\label{eq:retarded_metric_cut_relation}
\end{align}
The relative sign follows from the opposite orientations of the future
and past null generators after the antipodal identification.

Contracting Eq.~\eqref{eq:advanced_retarded_metric_matching} with $l^a(z)l^b(z)$
and using Eqs.~\eqref{eq:advanced_metric_cut_relation} and
\eqref{eq:retarded_metric_cut_relation} gives
\begin{equation}
\boxed{
l^a(z)\partial_a
\left[
Z_{\rm total}^{\rm out}(x,z)
+
\mathcal{A} Z_{\rm total}^{\rm in}(x,z)
\right]
=
0.
}
\label{eq:classical_directional_matching}
\end{equation}
The two reconstructions are normalized with respect to the same
Minkowski geometry. We also exclude an additional homogeneous
contribution annihilated by $l^a\partial_a$. With this common
normalization, the matching condition is represented at the level of
the total cuts by
\begin{equation}
\boxed{
Z_{\rm total}^{\rm out}(x,z)
+
\mathcal{A} Z_{\rm total}^{\rm in}(x,z)
=
0.
}
\label{eq:classical_total_cut_matching}
\end{equation}
In particular, the flat cuts satisfy
\begin{equation}
Z_0^{\rm out}(x,z)
+
\mathcal{A} Z_0^{\rm in}(x,z)
=
0.
\label{eq:flat_cut_matching}
\end{equation}
After asymptotic quantization, the same geometrical condition is promoted
to an operator identity,
\begin{equation}
\boxed{
\widehat{Z}_{\rm total}^{\rm out}(x,z)
+
\mathcal{A}\widehat{Z}_{\rm total}^{\rm in}(x,z)
=
0.
}
\label{eq:exact_quantum_total_matching}
\end{equation}
Here and throughout this section, the hat denotes an operator, whereas
$\mathcal{A}$ denotes only the antipodal transformation.

\subsection{Cut--cone decomposition}
\label{subsec:cut_cone_matching}
Each total cut is the sum of an asymptotic cut contribution and a cone
contribution:
\begin{align}
\widehat{Z}_{\rm total}^{\rm out}
&=
\widehat{Z}_{\rm cut}^{\rm out}
+
\widehat{Z}_{\rm cone}^{\rm out},
\label{eq:outgoing_total_cut_decomposition} \\
\widehat{Z}_{\rm total}^{\rm in}
&=
\widehat{Z}_{\rm cut}^{\rm in}
+
\widehat{Z}_{\rm cone}^{\rm in}.
\label{eq:incoming_total_cut_decomposition}
\end{align}
The cut terms contain the asymptotic radiative data and were constructed
in Sec.~\ref{sec:asymptotic_quantization}. The cone terms contain the
nonlinear propagation from null infinity to the interior point $x^a$.

At perturbative order $n$, the cone contribution is obtained by
angular inversion of the integral along each null generator:
\begin{equation}
\widehat{Z}_{n,\rm cone}^{\rm out/in}(x,z)
=
\oint d^2z' \, G_{0,0'}(z,z') \, \widehat{I}_n^{\rm out/in}(x,z').
\label{eq:quantum_cone_angular_inversion}
\end{equation}
Thus $\widehat{I}_n$ denotes the source integrated along a single null
generator, whereas $\widehat{Z}_{n,\rm cone}$ denotes the complete cone
contribution after angular inversion. Normal ordering is understood as
part of the quantization prescription.

Substituting Eqs.~\eqref{eq:outgoing_total_cut_decomposition} and
\eqref{eq:incoming_total_cut_decomposition} into
Eq.~\eqref{eq:exact_quantum_total_matching} gives the exact cut--cone
matching equation
\begin{equation}
\boxed{
\begin{aligned}
\widehat{Z}_{\rm cut}^{\rm out}
+
\mathcal{A}\widehat{Z}_{\rm cut}^{\rm in}
&=
-\left(
\widehat{Z}_{\rm cone}^{\rm out}
+
\mathcal{A}\widehat{Z}_{\rm cone}^{\rm in}
\right).
\end{aligned}
}
\label{eq:exact_quantum_cut_cone_matching}
\end{equation}
Define
\begin{align}
\Delta\widehat{Z}_{\rm cut}
&\equiv
\widehat{Z}_{\rm cut}^{\rm out}
+
\mathcal{A}\widehat{Z}_{\rm cut}^{\rm in},
\label{eq:Delta_quantum_cut_definition} \\
\Delta\widehat{Z}_{\rm cone}
&\equiv
\widehat{Z}_{\rm cone}^{\rm out}
+
\mathcal{A}\widehat{Z}_{\rm cone}^{\rm in}.
\label{eq:Delta_quantum_cone_definition}
\end{align}
The exact matching equation then takes the compact form
\begin{equation}
\boxed{
\Delta\widehat{Z}_{\rm cut}
+
\Delta\widehat{Z}_{\rm cone}
=
0.
}
\label{eq:compact_exact_quantum_matching}
\end{equation}
This equation is not a free-field equation imposed independently on
$Z$. It is the cut representation of the geometrical condition
$g_{ab}^{\rm adv}=g_{ab}^{\rm ret}$. The independent asymptotic
radiative variable remains the Bondi shear, or equivalently its
annihilation and creation operators.

After the future and antipodally mapped past cone sources have been
written in terms of the same incoming operators, their affine
integrations carry opposite prescriptions. For any Fourier branch with
denominator $D$, one has
\begin{equation}
\frac{i}{D+i0}+\frac{i}{D-i0}
=
2i\,\mathrm{PV}\frac{1}{D}.
\label{eq:exact_cone_PV_identity}
\end{equation}
Accordingly, the exact cone side can be organized schematically as
\begin{equation}
\boxed{
\Delta\widehat Z_{\rm cone}(x,z)
=
\int d\Gamma\,
\widehat{\mathcal K}[a^{\rm in},a^{{\rm in}\dagger}]
\,2i\,\mathrm{PV}\frac{1}{D},
}
\label{eq:exact_cone_PV_form}
\end{equation}
where $d\Gamma$ includes the momentum and angular integrations and
$\widehat{\mathcal K}$ is the complete NSF cone kernel. Equations
\eqref{eq:exact_matched_quantum_cut_delta_a} and
\eqref{eq:exact_cone_PV_form} are therefore the two exact formulas from
which all perturbative orders are extracted.

\subsection{Perturbative form of the matching equation}
\label{subsec:perturbative_matching}
We expand the total cuts as
\begin{align}
\widehat{Z}_{\rm total}^{\rm out}
&=
Z_0^{\rm out}\mathbf{1}
+
\sum_{n\geq1} \epsilon^n \widehat{Z}_n^{\rm out},
\label{eq:outgoing_total_cut_expansion} \\
\widehat{Z}_{\rm total}^{\rm in}
&=
Z_0^{\rm in}\mathbf{1}
+
\sum_{n\geq1} \epsilon^n \widehat{Z}_n^{\rm in}.
\label{eq:incoming_total_cut_expansion}
\end{align}
Since the flat cuts already satisfy
Eq.~\eqref{eq:flat_cut_matching}, the quantum matching condition is
enforced independently at every positive order:
\begin{equation}
\boxed{
\widehat{Z}_n^{\rm out}
+
\mathcal{A}\widehat{Z}_n^{\rm in}
=
0,
\qquad
n\geq1.
}
\label{eq:order_by_order_total_matching}
\end{equation}
At each order,
\begin{equation}
\widehat{Z}_n^{\rm out/in}
=
\widehat{Z}_{n,\rm cut}^{\rm out/in}
+
\widehat{Z}_{n,\rm cone}^{\rm out/in},
\label{eq:order_n_cut_cone_decomposition}
\end{equation}
and therefore
\begin{equation}
\boxed{
\widehat{Z}_{n,\rm cut}^{\rm out}
+
\mathcal{A}\widehat{Z}_{n,\rm cut}^{\rm in}
+
\widehat{Z}_{n,\rm cone}^{\rm out}
+
\mathcal{A}\widehat{Z}_{n,\rm cone}^{\rm in}
=
0.
}
\label{eq:order_n_cut_cone_matching}
\end{equation}
The exact form of the cut contribution is already given by
Eq.~\eqref{eq:exact_matched_quantum_cut_delta_a}. At a fixed order $n$
its coefficient is obtained from the single product
\begin{equation}
\left[\delta a_\lambda\widehat U_\omega[Z_1]\right]_{\epsilon^n}.
\label{eq:order_n_exact_cut_coefficient}
\end{equation}
We shall first display this construction explicitly at second and third
orders. Once the third-order calculation has been followed, the general
recursion will be stated in a form that applies without modification at
arbitrary order. In every case, the radiative and helicity projections
are applied to the complete order-$n$ matching equation
\eqref{eq:order_n_cut_cone_matching}, not to an isolated source term.

\subsection{Linear order and trivial scattering}
\label{subsec:first_order_matching}
At first order, the NSF equations are linear and there is no cone
contribution:
\begin{equation}
\widehat{Z}_{1,\rm cone}^{\rm out}
=
\widehat{Z}_{1,\rm cone}^{\rm in}
=
0.
\label{eq:first_order_no_cone}
\end{equation}
The matching equation therefore reduces to
\begin{equation}
\boxed{
\widehat{Z}_{1,\rm cut}^{\rm out}
+
\mathcal{A}\widehat{Z}_{1,\rm cut}^{\rm in}
=
0.
}
\label{eq:first_order_cut_matching}
\end{equation}
Using the antipodal and helicity conventions fixed in
Sec.~\ref{sec:asymptotic_quantization}, this relation gives the trivial
linear scattering map
\begin{equation}
\boxed{
a_{1,\lambda}^{\rm out}(\omega,z)
=
a_{1,\lambda}^{\rm in}(\omega,z).
}
\label{eq:trivial_scattering_linear}
\end{equation}
The outgoing operator therefore begins as
\begin{equation}
a_\lambda^{\rm out}
=
\epsilon a_{1,\lambda}^{\rm in}
+
\mathcal{O}(\epsilon^2).
\label{eq:outgoing_operator_linear_expansion}
\end{equation}
The first nontrivial matching problem arises at second order, where the
quadratic cone contribution and the correction
$\delta a_{2,\lambda}^{\rm out}$ appear.

\subsection{Second-order matching and the known result for
\texorpdfstring{$\delta a_2^{\rm out}$}{delta a2 out}}
\label{subsec:second_order_input}
The first nonlinear correction to the outgoing graviton operators was
derived in the first paper of this series. Before recalling that result,
we make explicit an exact simplification of the matching equation.

Equation~\eqref{eq:exact_matched_quantum_cut_delta_a} shows directly
that the complete cut contribution is proportional to
$\delta a_\lambda\widehat U_\omega[Z_1]$ and its Hermitian conjugate.
There is no first-order change of the asymptotic operators,
\begin{equation}
\delta a_{1,\lambda}=0,
\label{eq:delta_a1_zero}
\end{equation}
because the cone source vanishes at linear order. Therefore the first
nontrivial coefficient of the exact matched cut is
\begin{equation}
\left[\delta a_\lambda\widehat U_\omega[Z_1]\right]_{\epsilon^2}
=
\delta a_{2,\lambda}^{\rm out}.
\label{eq:second_order_cut_coefficient_exact}
\end{equation}
This conclusion follows without expanding the common unitary factor
beyond the order needed and without introducing a separate nonlinear
cut deformation.

At second order, after this exact cancellation, the matching equation
contains only the new outgoing correction and the quadratic cone
source:
\begin{equation}
\boxed{
\widehat{Z}_{2,\rm cut}^{\rm out}
\left[
\delta a_2^{\rm out}
\right]
=
-
\left(
\widehat{Z}_{2,\rm cone}^{\rm out}
+
\mathcal{A}\widehat{Z}_{2,\rm cone}^{\rm in}
\right).
}
\label{eq:second_order_reduced_matching}
\end{equation}
The helicity projection of
Eq.~\eqref{eq:second_order_reduced_matching} and the explicit expression
for $\delta a_{2,\lambda}^{\rm out}$ were derived in the first paper
of this series and will not be repeated here. We denote that result
schematically by
\begin{equation}
\delta a_{2,\lambda}^{\rm out}
=
\mathcal{F}_{2,\lambda}
\left[
a_1^{\rm in},
a_1^{{\rm in}\dagger}
\right],
\label{eq:deltaa2_known_result}
\end{equation}
where the kernel $\mathcal{F}_{2,\lambda}$ contains the sum- and
difference-frequency channels and the angular structures generated by
the second-order NSF cone source.

For the analysis below, this previously determined operator is used as
an input through
\begin{equation}
a_\lambda^{\rm out}
=
\epsilon a_{1,\lambda}^{\rm in}
+
\epsilon^2
\delta a_{2,\lambda}^{\rm out}
+
\epsilon^3
\delta a_{3,\lambda}^{\rm out}
+
\mathcal{O}(\epsilon^4).
\label{eq:outgoing_operator_expansion_through_third}
\end{equation}
The new calculation undertaken in the present work begins at third
order.

\subsection{Third-order matching: the nontrivial cut contribution}
\label{subsec:third_order_matching_complete}
The exact cut formula, Eq.~\eqref{eq:exact_matched_quantum_cut_delta_a},
contains the single ordered product
$\delta a_\lambda\widehat U_\omega[Z_1]$. Since
$\delta a_\lambda$ begins at order $\epsilon^2$, only the linear term
in the expansion of $\widehat U_\omega[Z_1]$ contributes at absolute
order $\epsilon^3$:
\begin{equation}
\widehat U_\omega[Z_1]
=
\mathbf 1-i\omega\epsilon\widehat Z_1+\mathcal O(\epsilon^2).
\label{eq:U_expansion_for_third_order}
\end{equation}
Consequently,
\begin{equation}
\boxed{
\left[\delta a_\lambda\widehat U_\omega[Z_1]\right]_{\epsilon^3}
=
\delta a_{3,\lambda}^{\rm out}
-i\omega\delta a_{2,\lambda}^{\rm out}\widehat Z_1.
}
\label{eq:third_order_annihilation_coefficient_exact}
\end{equation}
The Hermitian-conjugate coefficient is
\begin{equation}
\boxed{
\left[\widehat U_\omega^\dagger[Z_1]\delta a_\lambda^\dagger
\right]_{\epsilon^3}
=
\delta a_{3,\lambda}^{{\rm out}\dagger}
+i\omega\widehat Z_1\delta a_{2,\lambda}^{{\rm out}\dagger}.
}
\label{eq:third_order_creation_coefficient_exact}
\end{equation}
The order of the operators follows from Hermitian conjugation and is not
altered by commuting $\delta a_2^{\rm out}$ through $\widehat Z_1$.

Define
\begin{align}
\widehat{\mathcal C}_{3,\lambda}
&\equiv
\delta a_{3,\lambda}^{\rm out}
-i\omega\delta a_{2,\lambda}^{\rm out}\widehat Z_1,
\label{eq:C3lambda_definition}\\
\widehat{\mathcal C}_{3,\lambda}^{\dagger}
&\equiv
\delta a_{3,\lambda}^{{\rm out}\dagger}
+i\omega\widehat Z_1\delta a_{2,\lambda}^{{\rm out}\dagger}.
\label{eq:C3lambda_dagger_definition}
\end{align}
The third-order coefficient of the exact matched cut is therefore
\begin{equation}
\boxed{
\begin{aligned}
\Delta\widehat Z_{{\rm cut},3}(x,z)
={}&
\oint d^2z'\int_0^\infty
\frac{d\omega}{2\pi}
\sqrt{\frac{4\pi G}{\omega}}
:\!\Bigg\{
 e^{-i\omega x\cdot l'}
 \left[
 G_{0,-2'}\widehat{\mathcal C}_{3,+}
 +G_{0,2'}\widehat{\mathcal C}_{3,-}
 \right]
 +\mathrm{h.c.}
\Bigg\}\!:\, .
\end{aligned}
}
\label{eq:third_order_matched_quantum_cut}
\end{equation}
All operators are evaluated at the integration variables
$(\omega,z')$, and the dependence of $\widehat Z_1$ on $(x,z')$ is
understood.

The complete third-order matching equation is
\begin{equation}
\boxed{
\Delta\widehat Z_{{\rm cut},3}
+
\Delta\widehat Z_{{\rm cone},3}=0,
}
\label{eq:third_order_complete_matching}
\end{equation}
where
\begin{equation}
\Delta\widehat Z_{{\rm cone},3}
=
\widehat Z_{3,{\rm cone}}^{\rm out}
+
\mathcal A\widehat Z_{3,{\rm cone}}^{\rm in}.
\label{eq:third_order_cone_difference}
\end{equation}
Separating the term linear in the new operator from the known
third-order cut source gives
\begin{equation}
\Delta\widehat Z_{{\rm cut},3}
=
\widehat Z_{3,{\rm cut}}^{\rm out}[\delta a_3^{\rm out}]
+
\widehat{\mathscr Z}_3^{\rm cut},
\label{eq:third_order_cut_split}
\end{equation}
with
\begin{equation}
\boxed{
\begin{aligned}
\widehat{\mathscr Z}_3^{\rm cut}(x,z)
={}&
:\!\oint d^2z'\int_0^\infty
\frac{d\omega}{2\pi}
\sqrt{\frac{4\pi G}{\omega}}
\Bigg\{
-i\omega e^{-i\omega x\cdot l'}
\Big[
G_{0,-2'}\,\delta a_{2,+}^{\rm out}(\omega,z')\widehat Z_1(x,z')
\\
&\hspace{50mm}
+G_{0,2'}\,\delta a_{2,-}^{\rm out}(\omega,z')\widehat Z_1(x,z')
\Big]
+\mathrm{h.c.}
\Bigg\}\!:\, .
\end{aligned}
}
\label{eq:recursive_cut_complete}
\end{equation}
The Hermitian conjugate reverses the operator order and produces the
terms
$+i\omega e^{+i\omega x\cdot l'}\widehat Z_1
\delta a_{2,\lambda}^{{\rm out}\dagger}$ with the corresponding
angular kernels.

Substitution into Eq.~\eqref{eq:third_order_complete_matching} gives
\begin{equation}
\boxed{
\widehat Z_{3,{\rm cut}}^{\rm out}[\delta a_3^{\rm out}]
=
-\left(
\widehat{\mathscr Z}_3^{\rm cut}
+
\widehat Z_{3,{\rm cone}}^{\rm out}
+
\mathcal A\widehat Z_{3,{\rm cone}}^{\rm in}
\right).
}
\label{eq:third_order_reduced_matching}
\end{equation}
This is the equation to which the radiative and helicity projections
are applied. Its right-hand side is known entirely from the first- and
second-order solutions.

\paragraph{Recursion at arbitrary order.}
The preceding third-order derivation now makes the general structure
transparent. Write
\begin{equation}
\widehat U_\omega[Z_1]
=
\sum_{r=0}^{\infty}\epsilon^r\widehat U_\omega^{(r)}[Z_1],
\qquad
\widehat U_\omega^{(r)}[Z_1]
=
\frac{(-i\omega)^r}{r!}\widehat Z_1^{\,r},
\label{eq:U_coefficients_general}
\end{equation}
and
\begin{equation}
\delta a_\lambda
=
\sum_{m=2}^{\infty}\epsilon^m\delta a_{m,\lambda}^{\rm out}.
\label{eq:delta_a_general_expansion}
\end{equation}
Then
\begin{equation}
\boxed{
\left[\delta a_\lambda\widehat U_\omega[Z_1]\right]_{\epsilon^n}
=
\delta a_{n,\lambda}^{\rm out}
+
\sum_{m=2}^{n-1}
\delta a_{m,\lambda}^{\rm out}
\widehat U_\omega^{(n-m)}[Z_1],
\qquad n\geq2.
}
\label{eq:general_cut_recursion_coefficient}
\end{equation}
The first term contains the new operator, while the finite sum contains
only lower-order operators. After integration with the angular kernels,
the complete order-$n$ equation is
\begin{equation}
\boxed{
\Delta\widehat Z_{{\rm cut},n}
+
\Delta\widehat Z_{{\rm cone},n}=0,
}
\label{eq:general_order_n_matching_after_derivation}
\end{equation}
and the radiative projection of this entire equation determines
$\delta a_{n,\lambda}^{\rm out}$. Thus the same procedure gives the
outgoing correction at any perturbative order without reconstructing
the cut formula anew.

\subsection{The third-order cone contribution}
\label{subsec:third_order_cone}
The third-order cone term in
Eq.~\eqref{eq:third_order_reduced_matching} is the sum of two geometrically
distinct contributions,
\begin{equation}
\Delta\widehat{Z}_{{\rm cone},3}
=
\widehat{Z}_{3,\rm cone}^{\rm out}
+
\mathcal{A}\widehat{Z}_{3,\rm cone}^{\rm in},
\label{eq:third_order_cone_sum_recalled}
\end{equation}
associated respectively with the advanced reconstruction from
$\mathscr{I}^+$ and the retarded reconstruction from
$\mathscr{I}^-$, pulled back to the same celestial sphere by the
antipodal map.

The linear metric perturbation entering the third-order cone source is
obtained by integrating the linear News tensor over the Minkowski cut.
Following Ref.~\cite{BKR2023}, we write
\begin{equation}
\boxed{
\widehat{h}_{1ab}(x)
=
-\frac{1}{2\pi}
\oint_{S^2} dS'
\left[
m'_a m'_b
\widehat{\dot{\bar{\sigma}}}_1
\left(
x^c l'_c,\zeta',\bar{\zeta}'
\right)
+
\bar{m}'_a \bar{m}'_b
\widehat{\dot{\sigma}}_1
\left(
x^c l'_c,\zeta',\bar{\zeta}'
\right)
\right].
}
\label{eq:h1_News_Minkowski_cut}
\end{equation}
The tensor with raised indices is
\begin{equation}
\widehat{h}_1^{ab}
=
\eta^{ac}\eta^{bd}\widehat{h}_{1cd}.
\label{eq:h1_raised_indices}
\end{equation}

We separate the negative- and
positive-frequency coefficients of the linear metric perturbation.
Writing
\begin{equation}
\widehat{h}_{1ab}(x)
=
\int d\mu(k)
\left[
\widehat{h}_{1ab}(k)e^{-ik\cdot x}
+
\widehat{h}_{1ab}^\dagger(k)e^{+ik\cdot x}
\right],
\label{eq:h1_momentum_decomposition}
\end{equation}
with $k^a=\omega l^a(z)$ and $\omega>0$, the corresponding
coefficients are
\begin{equation}
\boxed{
\widehat{h}_{1ab}(k)
=
i\sqrt{\frac{G\omega}{\pi}}
\left[
m_a m_b a_{1,-}^{\mathrm{in}}(k)
+
\bar{m}_a \bar{m}_b a_{1,+}^{\mathrm{in}}(k)
\right],
}
\label{eq:h1_k_explicit}
\end{equation}
and
\begin{equation}
\boxed{
\widehat{h}_{1ab}^\dagger(k)
=
-i\sqrt{\frac{G\omega}{\pi}}
\left[
\bar{m}_a \bar{m}_b a_{1,-}^{\mathrm{in}\dagger}(k)
+
m_a m_b a_{1,+}^{\mathrm{in}\dagger}(k)
\right].
}
\label{eq:h1_k_dagger_explicit}
\end{equation}
The second coefficient is fixed by Hermitian conjugation of the first,
including complex conjugation of the polarization vectors. Therefore,
the complete spacetime field is Hermitian,
\begin{equation}
\widehat{h}_{1ab}^\dagger(x)
=
\widehat{h}_{1ab}(x).
\label{eq:h1_spacetime_Hermitian}
\end{equation}

At third order, the source integrated along each null generator is
constructed from the conformal-factor correction
$\delta\widehat{\Omega}_3$, the two cross terms involving the first-
and second-order angular fields, and the contribution generated by the
first-order correction to the inverse metric:
\begin{equation}
\begin{aligned}
\widehat{\mathcal{Q}}_3^{\rm out/in}
&\equiv
2\bar{\eth}'\eth' \delta\widehat{\Omega}_3^{\rm out/in}
\\
&\quad+
\eta^{ab} \partial_a\widehat{\Lambda}_1^{\rm out/in} \partial_b\widehat{\bar{\Lambda}}_2^{\rm out/in}
+
\eta^{ab} \partial_a\widehat{\Lambda}_2^{\rm out/in} \partial_b\widehat{\bar{\Lambda}}_1^{\rm out/in}
\\
&\quad+
\widehat{h}_1^{{\rm out/in},ab} \partial_a\widehat{\Lambda}_1^{\rm out/in} \partial_b\widehat{\bar{\Lambda}}_1^{\rm out/in}.
\end{aligned}
\label{eq:third_order_cone_source}
\end{equation}
Writing
\begin{equation}
g^{ab}
=
\eta^{ab}
+
\epsilon h_1^{ab}
+
\mathcal{O}(\epsilon^2),
\qquad
\Lambda
=
\epsilon\Lambda_1
+
\epsilon^2\Lambda_2
+
\mathcal{O}(\epsilon^3),
\label{eq:metric_and_Lambda_expansion_cone3}
\end{equation}
the cubic part of the metric contraction is
\begin{equation}
\begin{aligned}
\left[
g^{ab}
\partial_a\Lambda
\partial_b\bar{\Lambda}
\right]_3
&=
\eta^{ab} \partial_a\Lambda_1 \partial_b\bar{\Lambda}_2
+
\eta^{ab} \partial_a\Lambda_2 \partial_b\bar{\Lambda}_1
\\
&\quad+
h_1^{ab} \partial_a\Lambda_1 \partial_b\bar{\Lambda}_1.
\end{aligned}
\label{eq:metric_contraction_third_order}
\end{equation}
Thus the term proportional to $h_1^{ab}$ is part of the third-order
cone source itself.

Normal ordering of the complete operator-valued source is understood.
No reordering of the individual factors in
Eq.~\eqref{eq:third_order_cone_source} is performed.

For later reference, we separate the third-order cone source into three
geometrically distinct contributions:
\begin{equation}
\widehat{\mathcal{Q}}_3^{\rm out/in}
=
\widehat{\mathcal{Q}}_3^{(\Omega),\rm out/in}
+
\widehat{\mathcal{Q}}_3^{(\Lambda),\rm out/in}
+
\widehat{\mathcal{Q}}_3^{(h),\rm out/in},
\label{eq:Q3_cone_source_decomposition}
\end{equation}
where
\begin{align}
\widehat{\mathcal{Q}}_3^{(\Omega),\rm out/in}
&\equiv
2\bar{\eth}'\eth' \delta\widehat{\Omega}_3^{\rm out/in},
\label{eq:Q3_Omega_definition} \\
\widehat{\mathcal{Q}}_3^{(\Lambda),\rm out/in}
&\equiv
\eta^{ab} \partial_a\widehat{\Lambda}_1^{\rm out/in} \partial_b\widehat{\bar{\Lambda}}_2^{\rm out/in}
+
\eta^{ab} \partial_a\widehat{\Lambda}_2^{\rm out/in} \partial_b\widehat{\bar{\Lambda}}_1^{\rm out/in},
\label{eq:Q3_Lambda_definition} \\
\widehat{\mathcal{Q}}_3^{(h),\rm out/in}
&\equiv
\widehat{h}_1^{{\rm out/in},ab} \partial_a\widehat{\Lambda}_1^{\rm out/in} \partial_b\widehat{\bar{\Lambda}}_1^{\rm out/in}.
\label{eq:Q3_H_definition}
\end{align}
The first term is generated by the third-order conformal-factor
correction, the second by the cross products of the first- and
second-order angular fields, and the third by the first-order
nontrivial part of the inverse metric. The explicit Fourier decomposition of the three contributions,
including their separation into the four momentum branches, is given
in Appendix~\ref{app:explicit_C3_kernels}.

The second-order cone source, which will be needed below when discussing
the perturbation of the null generators, is
\begin{equation}
\widehat{\mathcal{Q}}_2^{\rm out/in}
\equiv
2\bar{\eth}'\eth' \delta\widehat{\Omega}_2^{\rm out/in}
+
\eta^{ab} \partial_a\widehat{\Lambda}_1^{\rm out/in} \partial_b\widehat{\bar{\Lambda}}_1^{\rm out/in}.
\label{eq:Q2_cone_source_definition}
\end{equation}

Within the flat-cone truncation adopted in this work, the outgoing cone
contribution is
\begin{align}
\widehat{Z}_{3,\rm cone}^{\rm out}(x,z)
&=
-\oint d^2z' \, G_{0,0'}(z,z') \int_0^\infty ds \,
\left. \widehat{\mathcal{Q}}_3^{\rm out} \right|_{x=x_s^{\rm out}(z')}.
\label{eq:outgoing_cone3_explicit}
\end{align}
Equivalently, displaying its source explicitly,
\begin{align}
\widehat{Z}_{3,\rm cone}^{\rm out}(x,z)
&=
-\oint d^2z' \, G_{0,0'}(z,z') \int_0^\infty ds \,
\Big[
2\bar{\eth}'\eth' \delta\widehat{\Omega}_3^{\rm out}
\nonumber \\
&\qquad\qquad
+
\eta^{ab} \partial_a\widehat{\Lambda}_1^{\rm out} \partial_b\widehat{\bar{\Lambda}}_2^{\rm out}
+
\eta^{ab} \partial_a\widehat{\Lambda}_2^{\rm out} \partial_b\widehat{\bar{\Lambda}}_1^{\rm out}
\nonumber \\
&\qquad\qquad
+
\widehat{h}_1^{{\rm out},ab} \partial_a\widehat{\Lambda}_1^{\rm out} \partial_b\widehat{\bar{\Lambda}}_1^{\rm out}
\Big]_{x=x_s^{\rm out}(z')}.
\label{eq:outgoing_cone3_expanded}
\end{align}
The antipodally mapped incoming contribution is
\begin{align}
\mathcal{A}\widehat{Z}_{3,\rm cone}^{\rm in}(x,z)
&=
-\oint d^2z' \, G_{0,0'}(z,z') \int_0^\infty ds \,
\left. \widehat{\mathcal{Q}}_3^{\rm in} \right|_{x=\mathcal{A}x_s^{\rm in}(z')},
\label{eq:incoming_cone3_explicit}
\end{align}
or, explicitly,
\begin{align}
\mathcal{A}\widehat{Z}_{3,\rm cone}^{\rm in}(x,z)
&=
-\oint d^2z' \, G_{0,0'}(z,z') \int_0^\infty ds \,
\Big[
2\bar{\eth}'\eth' \delta\widehat{\Omega}_3^{\rm in}
\nonumber \\
&\qquad\qquad
+
\eta^{ab} \partial_a\widehat{\Lambda}_1^{\rm in} \partial_b\widehat{\bar{\Lambda}}_2^{\rm in}
+
\eta^{ab} \partial_a\widehat{\Lambda}_2^{\rm in} \partial_b\widehat{\bar{\Lambda}}_1^{\rm in}
\nonumber \\
&\qquad\qquad
+
\widehat{h}_1^{{\rm in},ab} \partial_a\widehat{\Lambda}_1^{\rm in} \partial_b\widehat{\bar{\Lambda}}_1^{\rm in}
\Big]_{x=\mathcal{A}x_s^{\rm in}(z')}.
\label{eq:incoming_cone3_expanded}
\end{align}
Here $x_s^{\rm out}(z')$ and
$\mathcal{A}x_s^{\rm in}(z')$ denote the flat outgoing and
antipodally mapped incoming null generators introduced in
Sec.~\ref{sec:perturbative_hierarchy}.

All the above integrand terms in the present calculation are
evaluated along the zeroth-order Minkowski null generators.

The metric contribution
$\widehat{\mathcal{Q}}_3^{(h)}$ must be distinguished from the
correction produced by perturbing the null trajectories. The former
modifies the local cone integrand and is included here, whereas the
latter would arise from evaluating the lower-order source on the
deformed geodesics and is not included in the present work.

\paragraph{Remark on the perturbation of the null generators.}
The first-order deviation in the
parametric form of the null geodesics would enter the third-order cone
calculation. This contribution is not included in the explicit results
of the present work.

In a complete perturbative treatment, the null generators would be
expanded as
\begin{equation}
x_s^a(z')
=
x_s^{(0)a}(z')
+
\epsilon \, \delta x_1^a(s,z')
+
\epsilon^2 \, \delta x_2^a(s,z')
+
\mathcal{O}(\epsilon^3),
\label{eq:null_generator_perturbative_expansion}
\end{equation}
where $x_s^{(0)a}$ denotes the corresponding Minkowski null
geodesic. Since the cone source begins at second order,
\begin{equation}
\widehat{\mathcal{Q}}
=
\epsilon^2\widehat{\mathcal{Q}}_2
+
\epsilon^3\widehat{\mathcal{Q}}_3
+
\mathcal{O}(\epsilon^4),
\label{eq:cone_source_perturbative_expansion}
\end{equation}
its evaluation along the perturbed trajectory gives
\begin{align}
\widehat{\mathcal{Q}}\bigl(x_s\bigr)
&=
\epsilon^2
\widehat{\mathcal{Q}}_2\bigl(x_s^{(0)}\bigr)
\nonumber \\
&\quad+
\epsilon^3
\left[
\widehat{\mathcal{Q}}_3\bigl(x_s^{(0)}\bigr)
+
\delta x_1^c
\partial_c\widehat{\mathcal{Q}}_2
\bigl(x_s^{(0)}\bigr)
\right]
+
\mathcal{O}(\epsilon^4).
\label{eq:cone_source_perturbed_trajectory}
\end{align}
The first-order deformation of the generator therefore produces the
additional third-order source
\begin{equation}
\widehat{\mathcal{Q}}_{3,\rm geo}
=
\delta x_1^c \, \partial_c\widehat{\mathcal{Q}}_2.
\label{eq:third_order_geodesic_source}
\end{equation}
Correspondingly, the complete cone contribution would contain
\begin{align}
\delta\widehat{Z}_{3,\rm geo}^{\rm out}(x,z)
&=
-\oint d^2z' \,
G_{0,0'}(z,z')
\int_0^\infty ds \,
\left.
\left[
\delta x_{1,\rm out}^c
\partial_c\widehat{\mathcal{Q}}_2^{\rm out}
\right]
\right|_{x=x_s^{(0),\rm out}(z')},
\label{eq:outgoing_geodesic_correction} \\
\mathcal{A}\delta\widehat{Z}_{3,\rm geo}^{\rm in}(x,z)
&=
-\oint d^2z' \,
G_{0,0'}(z,z')
\int_0^\infty ds \,
\left.
\left[
\delta x_{1,\rm in}^c
\partial_c\widehat{\mathcal{Q}}_2^{\rm in}
\right]
\right|_{x=\mathcal{A} x_s^{(0),\rm in}(z')}.
\label{eq:incoming_geodesic_correction}
\end{align}
The determination of $\delta x_1^a$ requires solving the null
geodesic equation in the perturbatively reconstructed metric. In the
present work, all cone sources are instead evaluated along the
zeroth-order Minkowski generators, and the two corrections above are
not included.

This approximation concerns only the trajectories on which the sources
are evaluated. It does not affect the perturbative expansion of the
cone integrand. In particular, the contribution
\begin{equation}
\widehat{h}_1^{ab}
\partial_a\widehat{\Lambda}_1
\partial_b\widehat{\bar{\Lambda}}_1
\end{equation}
is retained as part of the complete third-order source.

Before Eqs.~\eqref{eq:outgoing_cone3_expanded} and
\eqref{eq:incoming_cone3_expanded} are compared, every lower-order
quantity must be replaced by its previously determined functional of
the incoming asymptotic data:
\begin{equation}
\begin{aligned}
\widehat{\Lambda}_2^{\rm out/in}
&=
\widehat{\Lambda}_2^{\rm out/in}
\left[
a_1^{\rm in},
a_1^{{\rm in}\dagger}
\right],
\\
\delta\widehat{\Omega}_3^{\rm out/in}
&=
\delta\widehat{\Omega}_3^{\rm out/in}
\left[
a_1^{\rm in},
a_1^{{\rm in}\dagger}
\right],
\\
\widehat{h}_1^{{\rm out/in},ab}
&=
\widehat{h}_1^{{\rm out/in},ab}
\left[
a_1^{\rm in},
a_1^{{\rm in}\dagger}
\right].
\end{aligned}
\label{eq:cone_lower_order_substitution}
\end{equation}
The two cone sources are therefore cubic operator-valued functionals
of the same incoming free data.

To compare the affine integrations, consider a Fourier branch with total
four-momentum
\begin{equation}
K_{\eta_2\eta_3}^{a}
=
k_1^{a}
+
\eta_2 k_2^{a}
+
\eta_3 k_3^{a},
\qquad
\eta_2,\eta_3=\pm1,
\label{eq:cone_branch_momentum}
\end{equation}
and signed frequency
\begin{equation}
\Omega_{\eta_2\eta_3}
=
\omega_1
+
\eta_2\omega_2
+
\eta_3\omega_3.
\label{eq:cone_branch_frequency}
\end{equation}
The outgoing and antipodally mapped incoming integrations produce the
separate affine distributions
\begin{align}
\mathscr{D}_{\eta_2\eta_3}^{\rm out}(z')
&=
\frac{i}{
l^-(z')\cdot K_{\eta_2\eta_3}+i0
},
\label{eq:D_outgoing_cone} \\
\mathscr{D}_{\eta_2\eta_3}^{\rm in}(z')
&=
\frac{i}{
l^-(z')\cdot K_{\eta_2\eta_3}-i0
}.
\label{eq:D_incoming_cone}
\end{align}
These two distributions must be retained separately until the two
geometrical cone contributions have been expressed in terms of the same
incoming operators.

Only after this common re-expression do they combine:
\begin{equation}
\boxed{
\mathscr{D}_{\eta_2\eta_3}^{\rm out}
+
\mathscr{D}_{\eta_2\eta_3}^{\rm in}
=
2i \, \mathrm{PV}
\frac{1}{
l^-(z')\cdot K_{\eta_2\eta_3}
}.
}
\label{eq:two_cones_to_PV}
\end{equation}
Thus the principal-value denominator is not assigned independently to
either cone. It is generated by the sum of the outgoing and
antipodally mapped incoming cone reconstructions, whose opposite
$i0$ prescriptions follow from their opposite null orientations.

This completes the cone term
$\Delta\widehat{Z}_{{\rm cone},3}$ entering
Eq.~\eqref{eq:third_order_reduced_matching}. The recursive cut term and the
two cone contributions can now be written with a common three-particle
measure and organized into their Fourier branches, Hermitian pairs, and
Bose-symmetrized kernels.

For the branch decomposition below, define a
separate notation for the three known sources entering the reduced
third-order matching equation:
\begin{align}
\widehat{\mathscr{Z}}_3^{\rm cut}(x,z)
&\equiv
\widehat{Z}_{3,\rm cut}^{\rm mix}(x,z),
\label{eq:Z3_cut_source_definition} \\
\widehat{\mathscr{Z}}_3^{+,\rm cone}(x,z)
&\equiv
\widehat{Z}_{3,\rm cone}^{\rm out}(x,z),
\label{eq:Z3_future_cone_source_definition} \\
\widehat{\mathscr{Z}}_3^{-,\rm cone}(x,z)
&\equiv
\mathcal{A}\widehat{Z}_{3,\rm cone}^{\rm in}(x,z).
\label{eq:Z3_past_cone_source_definition}
\end{align}
All three quantities are operator-valued. The superscripts $(+)$ and
$(-)$ on the cone sources label, respectively, the future and
antipodally mapped past reconstructions; they do not denote helicity or
Hermitian conjugation.

In terms of these definitions,
Eq.~\eqref{eq:third_order_reduced_matching} becomes
\begin{equation}
\boxed{
\widehat{Z}_{3,\rm cut}^{\rm out}
\left[
\delta a_3^{\rm out}
\right]
=
-(\widehat{\mathscr{Z}}_3^{\rm cut}
+
\widehat{\mathscr{Z}}_3^{+,\rm cone}
+
\widehat{\mathscr{Z}}_3^{-,\rm cone}).
}
\label{eq:third_order_three_terms}
\end{equation}
The first term on the right-hand side is the recursive cut contribution
generated by the first-order term in $\widehat U_\omega[Z_1]$ multiplying
the previously determined second-order radiative operator. Equivalently,
after isolating $\delta a_3^{\rm out}$, it is the known contribution
proportional to $-i\omega,\delta a_2^{\rm out}\widehat Z_1$. The
remaining two terms are the outgoing and antipodally mapped incoming
cone contributions. These three sources will be placed on a common
three-particle measure in the following subsection.

\subsection{Branch decomposition, Hermitian pairing, and Bose symmetry}
\label{subsec:branch_normal_ordering_complete}

We begin with the recursive cut contribution. Since
$\delta a_{2,\lambda}^{\rm out}$ is quadratic in the incoming free
operators, the ordered product
$\delta a_2^{\rm out}\widehat Z_1$ contains three linear radiative
modes.

For a fixed Fourier branch, define the second-order momentum and signed
frequency by

\begin{equation}
K_{\eta_2\eta_3}^{(2)a}
=
\eta_2 k_2^a+\eta_3 k_3^a,
\qquad
\Omega_{\eta_2\eta_3}^{(2)}
=
\eta_2\omega_2+\eta_3\omega_3,
\qquad
\omega_{\eta_2\eta_3}^{(2)}
=
\left|
\vec K_{\eta_2\eta_3}^{(2)}
\right|,
\qquad
\eta_2,\eta_3=\pm1.
\label{eq:second_order_branch_data}
\end{equation}
The corresponding third-order momentum and signed frequency are
\begin{equation}
K_{\eta_2\eta_3}^a
=
k_1^a+K_{\eta_2\eta_3}^{(2)a},
\qquad
\Omega_{\eta_2\eta_3}
=
\omega_1+\Omega_{\eta_2\eta_3}^{(2)}.
\label{eq:recursive_cut_third_order_branch_data}
\end{equation}
The recursive cut contribution can then be written with the common
three-particle measure as
\begin{align}
\widehat{\mathscr{Z}}_3^{\rm cut}(x,z)
&=
\sum_{\eta_2,\eta_3=\pm1}
\sum_{\lambda_1,\lambda_2,\lambda_3}
\int d\mu_1 \, d\mu_2 \, d\mu_3
\Big[
\mathcal{C}_{\eta_2\eta_3}^{\rm cut}(z;1,2,3)
\widehat{\mathcal{O}}_{\eta_2\eta_3}(1,2,3)
\nonumber \\[-1mm]
&\qquad\qquad\qquad\quad \times
e^{-i\Omega_{\eta_2\eta_3}t
+i\vec{K}_{\eta_2\eta_3}\cdot\vec{x}}
+\mathrm{H.c.}
\Big].
\label{eq:recursive_cut_branch_decomposition}
\end{align}
For compactness, we introduce the notation
\begin{equation}
a_i
\equiv
a_{1,\lambda_i}^{\rm in}(\omega_i,z_i),
\qquad
a_i^\dagger
\equiv
a_{1,\lambda_i}^{{\rm in}\dagger}(\omega_i,z_i),
\qquad
i=1,2,3.
\label{eq:linear_operator_shorthand}
\end{equation}
The signs $\eta_2$ and $\eta_3$ specify the frequency sectors of
the second and third modes: $\eta_i=+1$ corresponds to an
annihilation operator and $\eta_i=-1$ to a creation operator. The
first mode is chosen to belong to the annihilation sector in each
representative; the complementary sector is supplied by the Hermitian
conjugate.

The four normally ordered cubic monomials are therefore
\begin{equation}
\boxed{
\begin{aligned}
\widehat{\mathcal{O}}_{++}(1,2,3)
&=
a_1a_2a_3,
\\
\widehat{\mathcal{O}}_{+-}(1,2,3)
&=
a_3^\dagger a_1a_2,
\\
\widehat{\mathcal{O}}_{-+}(1,2,3)
&=
a_2^\dagger a_1a_3,
\\
\widehat{\mathcal{O}}_{--}(1,2,3)
&=
a_2^\dagger a_3^\dagger a_1.
\end{aligned}
}
\label{eq:cubic_branch_operator_monomials}
\end{equation}
With this convention, the branch labels agree with
\begin{equation}
K_{\eta_2\eta_3}^{a}
=
k_1^{a}
+
\eta_2 k_2^{a}
+
\eta_3 k_3^{a},
\qquad
\Omega_{\eta_2\eta_3}
=
\omega_1
+
\eta_2\omega_2
+
\eta_3\omega_3.
\label{eq:branch_sign_operator_correspondence}
\end{equation}
An annihilation operator contributes the phase
$e^{-ik_i\cdot x}$, whereas a creation operator contributes
$e^{+ik_i\cdot x}$.

The Hermitian-conjugate term in
Eq.~\eqref{eq:recursive_cut_branch_decomposition} contains
$\widehat{\mathcal{O}}_{\eta_2\eta_3}^{\dagger}(1,2,3)$
and reverses the order of the operators. Since the representatives in
Eq.~\eqref{eq:cubic_branch_operator_monomials} are already normally
ordered, their Hermitian conjugates are also normally ordered:
\begin{equation}
\begin{aligned}
\widehat{\mathcal{O}}_{++}^{\dagger}
&=
a_3^\dagger a_2^\dagger a_1^\dagger,
&
\widehat{\mathcal{O}}_{+-}^{\dagger}
&=
a_2^\dagger a_1^\dagger a_3,
\\
\widehat{\mathcal{O}}_{-+}^{\dagger}
&=
a_3^\dagger a_1^\dagger a_2,
&
\widehat{\mathcal{O}}_{--}^{\dagger}
&=
a_1^\dagger a_3a_2.
\end{aligned}
\label{eq:cubic_branch_operator_monomials_dagger}
\end{equation}
No commutator contribution is generated because
normal ordering is imposed on the classical amplitudes before the
replacement of the complex-conjugate amplitudes by creation
operators.

The labels $\eta_2,\eta_3$ refer only to the signs of the Fourier
frequencies and do not denote helicity. The helicities
$\lambda_1,\lambda_2,\lambda_3$ remain independent labels and are
summed explicitly in the branch decomposition.

The recursive cut kernel has a single origin: the order-three
coefficient of $\delta a_\lambda\widehat U_\omega[Z_1]$. Let
$Z_1(1)$ denote the numerical first-order cut coefficient associated
with the mode $k_1$, and let
$A_{2,\eta_2\eta_3}(2,3)$ denote the second-order branch coefficient of
$\delta a_2^{\rm out}$. Then
\begin{equation}
\boxed{
\mathcal C_{\eta_2\eta_3}^{\rm cut}(z;1,2,3)
=
\mathfrak G_z\!\left[
-i\omega_{\eta_2\eta_3}^{(2)}
A_{2,\eta_2\eta_3}(2,3)Z_1(1)
\right].
}
\label{eq:C3_cut_all_branches}
\end{equation}
Here $\mathfrak G_z$ denotes the complete angular inversion, including
the $G_{0,-2'}$ contribution from the shear sector and the
$G_{0,2'}$ contribution from the conjugate-shear sector. The complete
operator content is contained in
$\widehat{\mathcal O}_{\eta_2\eta_3}$. The Hermitian conjugate in
Eq.~\eqref{eq:recursive_cut_branch_decomposition} generates the
opposite-frequency partner and reverses the order of
$\delta a_2^{{\rm out}\dagger}$ and $\widehat Z_1$.

The recursive cut contribution, the outgoing cone contribution, and
the antipodally mapped incoming cone contribution can all be written
with the same three-particle measure and decomposed into the same
Fourier branches. We denote these three cases by
\[
r=\mathrm{cut},+\mathrm{cone},-\mathrm{cone},
\]
with the corresponding operator-valued sources understood as
\[
\widehat{\mathscr{Z}}_3^{\rm cut},
\qquad
\widehat{Z}_{3,\rm cone}^{\rm out},
\qquad
\mathcal{A}\widehat{Z}_{3,\rm cone}^{\rm in},
\]
respectively. Thus,
\begin{align}
\widehat{\mathscr{Z}}_3^{(r)}(x,z)
&=
\sum_{\eta_2,\eta_3=\pm1}
\sum_{\lambda_1,\lambda_2,\lambda_3}
\int d\mu_1 \, d\mu_2 \, d\mu_3
\Big[
\mathcal{C}_{\eta_2\eta_3}^{(r)}(z;1,2,3)
\widehat{\mathcal{O}}_{\eta_2\eta_3}(1,2,3)
\nonumber \\[-1mm]
&\qquad\qquad\qquad\quad \times
e^{-i\Omega_{\eta_2\eta_3}t
+i\vec{K}_{\eta_2\eta_3}\cdot\vec{x}}
+\mathrm{H.c.}
\Big],
\nonumber \\
&\hspace{50mm}
r=\mathrm{cut},+\mathrm{cone},-\mathrm{cone}.
\label{eq:three_sources_branch_decomposition}
\end{align}
For $r=\mathrm{cut}$,
Eq.~\eqref{eq:three_sources_branch_decomposition} reproduces
Eq.~\eqref{eq:recursive_cut_branch_decomposition}, with the explicit
kernel given in Eq.~\eqref{eq:C3_cut_all_branches}. For
$r=+\mathrm{cone}$ and $r=-\mathrm{cone}$, the kernels are
obtained from the outgoing and antipodally mapped incoming cone
reconstructions, respectively.

The distinction between the quantities appearing in
Eq.~\eqref{eq:three_sources_branch_decomposition} will be maintained
throughout. The sources
$\widehat{\mathscr{Z}}_3^{(r)}$ and the monomials
$\widehat{\mathcal{O}}_{\eta_2\eta_3}$ are operators and therefore
carry hats, whereas
$\mathcal{C}_{\eta_2\eta_3}^{(r)}$ is a numerical kernel and carries
no hat.

The Hermitian-conjugate term acts on the complete preceding
contribution:
\begin{align}
&
\left[
\mathcal{C}_{\eta_2\eta_3}^{(r)}
\widehat{\mathcal{O}}_{\eta_2\eta_3}
e^{-i\Omega_{\eta_2\eta_3}t
+i\vec{K}_{\eta_2\eta_3}\cdot\vec{x}}
\right]^\dagger
\nonumber \\
&\qquad=
\mathcal{C}_{\eta_2\eta_3}^{(r)*}
\widehat{\mathcal{O}}_{\eta_2\eta_3}^{\dagger}
e^{+i\Omega_{\eta_2\eta_3}t
-i\vec{K}_{\eta_2\eta_3}\cdot\vec{x}}.
\label{eq:complete_branch_hermitian_pair}
\end{align}
Thus Hermitian conjugation complex conjugates the numerical kernel,
reverses the operator order, and supplies the opposite-frequency member
of the pair.

The sign rule, the complete Hermitian pairing, and the relation with
the second-order reduction are given explicitly in
Appendix~\ref{app:sign_hermitian_tables}. At the classical level, the
commuting amplitudes are first rearranged with all complex-conjugate
amplitudes on the left. Quantization then maps
\begin{equation}
\bar{a}_i\longmapsto a_i^\dagger,
\qquad
a_i\longmapsto a_i,
\label{eq:classical_quantum_operator_map}
\end{equation}
so that the representatives
$\widehat{\mathcal{O}}_{\eta_2\eta_3}$ are normally ordered by
definition and no additional commutator term is generated.

The recursive cut and the two cone kernels must first be expressed
with the same branch labels, the same three-particle measure, and the
same ordered operator basis. Only then are the kernels combined and
Bose symmetrized. We define
\begin{equation}
\mathcal{C}_{\eta_2\eta_3}^{(r),\mathrm{B}}(z;1,2,3)
\equiv
\operatorname{Sym}_{\mathrm{B}}
\mathcal{C}_{\eta_2\eta_3}^{(r)}(z;1,2,3),
\label{eq:Bose_sym_operator_main}
\end{equation}
where $\operatorname{Sym}_{\mathrm{B}}$ acts on the
momentum--helicity labels of identical operator slots and does not
change the normal ordering of
$\widehat{\mathcal{O}}_{\eta_2\eta_3}$.

The four Hermitian-pair representatives must be kept distinct until
the recursive cut and the two cone kernels have been added. Branches
related by permutations of identical modes are combined only after
this addition, so that Bose symmetry is imposed on the complete
third-order source rather than independently on each of its
geometrical contributions.
\subsection{Radiative projection of the three terms}
\label{subsec:radiative_projection_three_terms}
The three contributions in Eq.~\eqref{eq:third_order_three_terms} are
now made radiative separately. For each branch, define the future null
momentum with the same spatial component,
\begin{equation}
p_{\eta_2\eta_3}^{a}
\equiv
\left(
\left|\vec{K}_{\eta_2\eta_3}\right|,
\vec{K}_{\eta_2\eta_3}
\right),
\qquad
p_{\eta_2\eta_3}^{2}=0.
\label{eq:radiative_p_eta}
\end{equation}
The radiative map acts on each Hermitian pair according to
\begin{align}
&\Pi_{\mathrm{rad}}
\Big[
\mathcal{C}_{\eta_2\eta_3}^{(r),\mathrm{B}}(z;1,2,3)
\mathcal{O}_{\eta_2\eta_3}(1,2,3)
e^{-i\Omega_{\eta_2\eta_3}t
+i\vec{K}_{\eta_2\eta_3}\cdot\vec{x}}
+\mathrm{h.c.}
\Big]
\nonumber \\
&\qquad=
\mathcal{C}_{\eta_2\eta_3}^{(r),\mathrm{B}}(z;1,2,3)
\mathcal{O}_{\eta_2\eta_3}(1,2,3)
e^{-ip_{\eta_2\eta_3}\cdot x}
\nonumber \\
&\qquad\quad+
\left[
\mathcal{C}_{\eta_2\eta_3}^{(r),\mathrm{B}}(z;1,2,3)
\mathcal{O}_{\eta_2\eta_3}(1,2,3)
\right]^\dagger
e^{+ip_{\eta_2\eta_3}\cdot x}.
\label{eq:radiative_map_three_terms}
\end{align}
The first term in Eq.~\eqref{eq:radiative_map_three_terms} contributes
to $\delta a_3^{\mathrm{out}}$, while the second contributes to
$\delta a_3^{\mathrm{out}\dagger}$. In particular, the
Cauchy--Klein--Gordon factor
\begin{equation}
\frac{\omega_q+\Omega_{\eta_2\eta_3}}{2\omega_q}
\label{eq:Cauchy_factor_not_used}
\end{equation}
is absent. The radiative prescription sets
\begin{equation}
\omega_q
=
\left|\vec{K}_{\eta_2\eta_3}\right|
\end{equation}
in the outgoing phase before the two frequency sectors are identified,
so the corresponding normalized factor is one.

Define the complete Bose-symmetrized radiative source kernel by
\begin{equation}
\boxed{
\begin{aligned}
\mathcal{C}_{\eta_2\eta_3}^{(3),\mathrm{tot}}(z;1,2,3)
&=
\mathcal{C}_{\eta_2\eta_3}^{\mathrm{cut},\mathrm{B}}(z;1,2,3)
+
\mathcal{C}_{\eta_2\eta_3}^{+,\mathrm{cone},\mathrm{B}}(z;1,2,3)
\\
&\quad+
\mathcal{C}_{\eta_2\eta_3}^{-,\mathrm{cone},\mathrm{B}}(z;1,2,3).
\end{aligned}
}
\label{eq:C3_total_three_terms}
\end{equation}
This equation is the kernel-level form of the three-term matching
relation. The future and antipodal-past cone kernels may subsequently
be combined into the principal-value expressions displayed in
Eq.~\eqref{eq:two_cones_to_PV}, but the recursive cut contribution
remains a separate summand.

\subsection{Helicity extraction and the complete cubic operator}
\label{subsec:complete_deltaa3_radiative}
The helicity projectors act only after the complete scalar source has
been constructed. The radiative third-order vertices are
\begin{align}
\mathcal{V}_{\eta_2\eta_3}^{(3),+,\mathrm{geom}}(q;1,2,3)
&=
-\left.
\eth_z^2
\left[
\bigl(l(z)\cdot\hat{q}\bigr)
\mathcal{C}_{\eta_2\eta_3}^{(3),\mathrm{tot}}(z;1,2,3)
\right]
\right|_{z=\hat{q}},
\label{eq:V3_plus_radiative_complete} \\
\mathcal{V}_{\eta_2\eta_3}^{(3),-,\mathrm{geom}}(q;1,2,3)
&=
-\left.
\bar{\eth}_z^2
\left[
\bigl(l(z)\cdot\hat{q}\bigr)
\mathcal{C}_{\eta_2\eta_3}^{(3),\mathrm{tot}}(z;1,2,3)
\right]
\right|_{z=\hat{q}}.
\label{eq:V3_minus_radiative_complete}
\end{align}
The angular derivatives are taken before setting $z=\hat{q}$.  These
quantities are the reduced geometric vertices: the normalization factors of
the radiative mode expansion have not yet been included.  With
\begin{equation}
 d\mu(k)=\frac{d^3k}{(2\pi)^3 2\omega},
 \qquad
 d^2z\,\frac{d\omega}{2\pi}
 \sqrt{\frac{4\pi G}{\omega}}
 =d\mu(k)\,\nu(\omega),
 \qquad
 \nu(\omega)=8\pi^2\sqrt{4\pi G}\,\omega^{-3/2},
\label{eq:mode_to_Lorentz_measure_factor}
\end{equation}
the fully normalized third-order vertex is denoted by
\begin{equation}
\mathcal W_{\eta_2\eta_3}^{(3),\lambda}(q;1,2,3)
=
\mathcal N^{(3)}(q;1,2,3)\,
\mathcal V_{\eta_2\eta_3}^{(3),\lambda,\mathrm{geom}}(q;1,2,3).
\label{eq:W3_full_normalized_vertex}
\end{equation}
The factor $\mathcal N^{(3)}$ is the normalization inherited from the three
radiative mode expansions and from the extraction of the outgoing mode.  In
the ordered one-loop routing of Sec.~\ref{subsec:loop_from_deltaa3}, where
legs $2$ and $3$ are the internal lines,
\begin{equation}
\left.\mathcal N^{(3)}(q;1,2,3)\right|_{\rm loop}
=
\mathcal N_{\rm ext}(q;1)\,\nu(\omega_2)\nu(\omega_3),
\label{eq:N3_loop_factorization}
\end{equation}
with $\mathcal N_{\rm ext}$ independent of the loop radius.  Equations
\eqref{eq:W3_full_normalized_vertex} and
\eqref{eq:N3_loop_factorization} make explicit the internal mode factors that
were previously left implicit when the cubic operator was written with the
measure $d\mu_i$.

The complete normally ordered cubic correction is therefore
\begin{equation}
\boxed{
\begin{aligned}
\delta a_{3,\lambda}^{\mathrm{out}}(q)
&=
\sum_{\lambda_1,\lambda_2,\lambda_3}
\sum_{\eta_2,\eta_3=\pm1}
\int d\mu_1 \, d\mu_2 \, d\mu_3 \,
(2\pi)^3
\delta^{(3)}
\left(
\vec{q}
-\vec{k}_1
-\eta_2\vec{k}_2
-\eta_3\vec{k}_3
\right)
\\
&\quad\times
\mathcal{W}_{\eta_2\eta_3}^{(3),\lambda}(q;1,2,3)
\widehat{\mathcal{O}}_{\eta_2\eta_3}(1,2,3)
\end{aligned}
}
\label{eq:deltaa3_complete_three_terms}
\end{equation}
Here
\begin{equation}
d\mu_n
=
\frac{d^3k_n}{(2\pi)^3 2\omega_n},
\qquad
\omega_n=|\vec{k}_n|,
\qquad
n=1,2,3.
\label{eq:dmu_complete_deltaa3}
\end{equation}
The kernels collected in Appendix~\ref{app:explicit_C3_kernels} are the
reduced kernels entering $\mathcal V^{(3),\mathrm{geom}}$.  Their missing
mode-normalization factors belong to $\mathcal N^{(3)}$ and therefore to the
full vertex $\mathcal W^{(3)}$; they must neither be omitted nor counted
again inside the reduced kernels.
Its Hermitian conjugate is not extracted independently: it is the
negative-frequency half generated by
Eq.~\eqref{eq:radiative_map_three_terms},
\begin{equation}
\delta a^{\mathrm{out}\dagger}_{3,\lambda}(q)
=
\left[
\delta a^{\mathrm{out}}_{3,\lambda}(q)
\right]^\dagger,
\label{eq:deltaa3_dagger_complete}
\end{equation}
with the usual interchange between the shear and conjugate-shear
helicity sectors.

\subsection{Explicit paired-cone kernels}
\label{subsec:explicit_paired_cone_kernels_main}
For later use, after the two cone contributions have been combined into
the principal value of Eq.~\eqref{eq:two_cones_to_PV}, their explicit
cubic kernels are collected in
Appendix~\ref{app:explicit_C3_kernels}. Those kernels must be added to
the recursive-cut kernel in Eq.~\eqref{eq:C3_cut_all_branches} before
the radiative and helicity projections in
Eqs.~\eqref{eq:V3_plus_radiative_complete} and
\eqref{eq:V3_minus_radiative_complete} are performed.

\section{Two-to-two scattering and the one-loop sector}
\label{sec:two_to_two_one_loop_uv}

\subsection{Perturbative sectors contributing to connected
\texorpdfstring{$2\to2$}{2 to 2} scattering}
\label{subsec:perturbative_sectors_2to2}

Within the truncation in which the outgoing operator is known through
$\delta a^{\rm out}_3$, the nonlinear connected four-graviton matrix element
contains
\begin{equation}
\mathcal M_{\rm conn}
=
\mathcal M^{(22)}+\mathcal M^{(23)}+\mathcal M^{(32)}+\mathcal M^{(33)}.
\end{equation}
The mixed products vanish by operator-number parity: $\delta a_2$ is
quadratic and $\delta a_3$ is cubic in the free radiative
operators, so that either mixed matrix element contains seven operators after
the two incoming creation operators are included.  Wick contractions pair
operators two by two and therefore
\begin{equation}
\mathcal M^{(23)}=\mathcal M^{(32)}=0.
\end{equation}
Hence,
\begin{equation}
\boxed{
\mathcal M_{\rm conn}
=
\mathcal M^{(22)}+\mathcal M^{(33)}
}
\label{eq:Mconn_sectors_summary}
\end{equation}
within this third-order truncation.  The new term is
\begin{equation}
\mathcal M^{(33)}
=
\left\langle0\right|
\delta a^{\rm out}_{3,\lambda'_1}(p'_1)
\delta a^{\rm out}_{3,\lambda'_2}(p'_2)
 a^{{\rm in}\dagger}_{\lambda_1}(p_1)
 a^{{\rm in}\dagger}_{\lambda_2}(p_2)
\left|0\right\rangle_{\rm conn}.
\label{eq:M33_sector_definition}
\end{equation}

\subsection{The loop generated by
\texorpdfstring{$\delta a_3\delta a_3$}{delta a3 delta a3}}
\label{subsec:loop_from_deltaa3}

The connected cubic correction has the form
\begin{align}
\delta a^{\rm out}_{3,\lambda}(q)
={}&
\sum_{\lambda_1,\lambda_2,\lambda_3}
\sum_{\eta_2,\eta_3=\pm1}
\int d\mu_1\,d\mu_2\,d\mu_3\,(2\pi)^3
\delta^{(3)}\!\left(
\vec q-\vec k_1-\eta_2\vec k_2-\eta_3\vec k_3
\right)
\nonumber\\
&\times
\mathcal W^{(3),\lambda}_{\eta_2\eta_3}(q;1,2,3)
\widehat{\mathcal O}_{\eta_2\eta_3}(1,2,3),
\label{eq:deltaa3_vertex_compact}
\end{align}
where
\begin{equation}
d\mu_i=\frac{d^3k_i}{(2\pi)^3\,2\omega_i},
\qquad
\omega_i=|\vec k_i|,
\label{eq:loop_covariant_measure}
\end{equation}
and
\begin{align}
\widehat{\mathcal O}_{++}&=a_1a_2a_3,
&
\widehat{\mathcal O}_{+-}&=a_3^\dagger a_1a_2,
\nonumber\\
\widehat{\mathcal O}_{-+}&=a_2^\dagger a_1a_3,
&
\widehat{\mathcal O}_{--}&=a_2^\dagger a_3^\dagger a_1.
\label{eq:loop_cubic_monomials}
\end{align}
The complete reduced geometric vertices are obtained only after the recursive
cut and the two cone terms have been added,
\begin{align}
\mathcal V^{(3),+,\mathrm{geom}}_{\eta_2\eta_3}
&=
-\left.
\eth_z^2\!\left[
(l(z)\!\cdot\!\widehat q)
\mathcal C^{(3),\mathrm{tot}}_{\eta_2\eta_3}
\right]\right|_{z=\widehat q},
\label{eq:V3_plus_def}
\\
\mathcal V^{(3),-,\mathrm{geom}}_{\eta_2\eta_3}
&=
-\left.
\bar\eth_z^2\!\left[
(l(z)\!\cdot\!\widehat q)
\mathcal C^{(3),\mathrm{tot}}_{\eta_2\eta_3}
\right]\right|_{z=\widehat q}.
\label{eq:V3_minus_def}
\end{align}
The corresponding vertices in the operator are
$\mathcal W^{(3)}=\mathcal N^{(3)}\mathcal V^{(3),\mathrm{geom}}$.

For the Wightman ordering displayed in
Eq.~\eqref{eq:M33_sector_definition}, the leftmost cubic monomial cannot
contain a creation operator because $\langle0|a^\dagger=0$.  The only
connected ordered pair is therefore
\begin{equation}
\widehat{\mathcal O}_{++}(1,2,3)
\widehat{\mathcal O}_{--}(1',2',3')
=
a_1a_2a_3\,{a'}_2^\dagger {a'}_3^\dagger {a'}_1,
\label{eq:surviving_ordered_pair_main}
\end{equation}
together with the exchange of the two outgoing insertions.  The mixed pairs
have the correct total number of creation operators but vanish in this fixed
ordering: after the right mixed vertex acts on the two-particle state, only
one quantum remains, whereas the left mixed vertex contains two
annihilators.

A connected contraction can be routed as
\begin{equation}
k_1=p_i,
\qquad
k'_1=p_j,
\qquad
k_2=k'_2=k,
\qquad
k_3=k'_3=r,
\qquad
\{i,j\}=\{1,2\}.
\label{eq:loop_routing_main}
\end{equation}
The two spatial delta functions carried by the outgoing operators then become
\begin{align}
\Delta_1
&=(2\pi)^3\delta^{(3)}\!\left(
\vec p'_1-\vec p_i-\vec k-\vec r
\right),
\label{eq:Delta1_loop_main}
\\
\Delta_2
&=(2\pi)^3\delta^{(3)}\!\left(
\vec p'_2-\vec p_j+\vec k+\vec r
\right).
\label{eq:Delta2_loop_main}
\end{align}
Their product factorizes as
\begin{align}
\Delta_1\Delta_2
={}&
(2\pi)^3\delta^{(3)}\!\left(
\vec p'_1+\vec p'_2-\vec p_1-\vec p_2
\right)
\nonumber\\
&\times
(2\pi)^3\delta^{(3)}\!\left(
\vec P_i-\vec k-\vec r
\right),
\qquad
\vec P_i\equiv\vec p'_1-\vec p_i
=\vec p_j-\vec p'_2.
\label{eq:two_vertex_deltas_reduced_main}
\end{align}
Using the second delta to set
$\vec r=\vec P_i-\vec k$ gives
\begin{align}
\mathcal A^{(33)}
={}&
(2\pi)^3\delta^{(3)}\!\left(
\vec p'_1+\vec p'_2-\vec p_1-\vec p_2
\right)
\mathcal M^{(33)},
\label{eq:A33_momentum_conservation}
\\
\mathcal M^{(33)}
={}&
\sum_{\substack{i,j=1,2\\i\neq j}}
\sum_{\lambda_k,\lambda_r}
\int
\frac{d^3k}
{(2\pi)^3\,2\omega_k\,2\omega_{P_i-k}}
\nonumber\\
&\times
\mathcal W^{(3),\lambda'_1}_{++}
\left(p'_1;p_i,k,P_i-k\right)
\mathcal W^{(3),\lambda'_2}_{--}
\left(p'_2;p_j,k,P_i-k\right)
+
(p'_1\leftrightarrow p'_2).
\label{eq:M33_single_loop_integral}
\end{align}
The three-vector $\vec k$ is not fixed by any remaining delta function.
Thus Eq.~\eqref{eq:M33_single_loop_integral} contains a genuine one-loop
integration.  Equivalently, the connected graph has two cubic vertices and
two internal lines, and hence
\begin{equation}
L=I-V+1=2-2+1=1.
\label{eq:loop_topological_count}
\end{equation}
The two choices $i=1,2$ are the two crossed external attachments.  In this
ordered contraction each vertex contains one incoming and one outgoing
external leg; an independent $s$-channel assignment is not generated by
Eq.~\eqref{eq:surviving_ordered_pair_main}.

\subsection{Ultraviolet behavior}
\label{subsec:uv_power_counting_deltaa3}

\subsubsection{Mode normalization and the plane-wave radial limit}

Let
\begin{equation}
\vec k=K\widehat k,\qquad K\equiv|\vec k|,
\qquad
\vec r=\vec P_i-K\widehat k,
\qquad
K\longrightarrow\infty.
\label{eq:UV_assignment_main}
\end{equation}
Then
\begin{equation}
\frac{d^3k}{2\omega_k\,2\omega_r}
\sim
\frac14\,dK\,d^2\widehat k.
\label{eq:on_shell_measure_uv_deltaa3}
\end{equation}
Thus the on-shell two-particle measure has radial degree zero.

The additional factor arises when each shear mode is rewritten with
$d\mu(k)$.  Equation~\eqref{eq:mode_to_Lorentz_measure_factor} gives
\begin{equation}
\nu(\omega_k)\nu(\omega_r)
=
\left(8\pi^2\sqrt{4\pi G}\right)^2
(\omega_k\omega_r)^{-3/2}
=
\mathcal O(K^{-3})
\label{eq:two_internal_mode_weights_UV}
\end{equation}
for the two internal legs of each cubic vertex.  These factors are not part of
the reduced kernels $\mathcal T_3$ displayed in the appendix; they are part
of the full vertex $\mathcal W^{(3)}$.

The explicit reduced third-order kernels contain momentum contractions,
angular Green functions and derivatives, together with one exterior affine
principal-value denominator.  In the ordered one-loop routing, their worst
large-$K$ behavior is
\begin{equation}
\mathcal V^{(3),\mathrm{geom}}(p;k,r)
=
\mathcal O(K).
\label{eq:V3_geometric_UV_degree}
\end{equation}
The helicity projectors have radial degree zero because they act only on the
celestial variables.  Combining Eqs.~\eqref{eq:N3_loop_factorization},
\eqref{eq:two_internal_mode_weights_UV}, and
\eqref{eq:V3_geometric_UV_degree}, one obtains
\begin{equation}
\boxed{
\mathcal W^{(3)}(p;k,r)
=
\mathcal O(K^{-2}) .
}
\label{eq:W3_full_UV_degree}
\end{equation}
The fixed external normalization $\mathcal N_{\rm ext}$ does not affect this
radial degree.

The product of the two fully normalized vertices in
Eq.~\eqref{eq:M33_single_loop_integral} behaves as
\begin{equation}
\mathcal W_L^{(3)}\mathcal W_R^{(3)}
=
\mathcal O(K^{-4}),
\label{eq:two_W3_UV_degree}
\end{equation}
and the ultraviolet tail satisfies
\begin{equation}
\left|\mathcal M^{(33)}_{K>K_{UV}}\right|
\leq
C\int_{K_{UV}}^{\infty}\frac{dK}{K^4}
=
\frac{C}{3K_{UV}^3}.
\label{eq:M33_plane_wave_UV_bound}
\end{equation}
We therefore obtain
\begin{equation}
\boxed{
\mathcal M^{(33)}_{\rm plane\ wave}
\text{ is radially ultraviolet finite at one loop.}
}
\label{eq:M33_plane_wave_UV_finite}
\end{equation}
No cancellation among the leading pieces of the recursive cut,
$\Omega_3$, the mixed $\Lambda_1\Lambda_2$ terms and the
$h_1^{ab}$ term is required for this conclusion.  Such cancellations may
still occur, but the mode-normalization weights already provide sufficient
large-energy decay.

\subsubsection{Physical smeared states}
\label{subsubsec:uv_smeared_M33}

The plane-wave result concerns the radial ultraviolet tail of the kernel.
The radiative creation and annihilation operators remain operator-valued
distributions, so physical states are defined by smearing them against smooth
profiles,
\begin{equation}
a_\lambda[f]
\equiv
\int d\mu(k)\,f_\lambda(k)a_\lambda(k),
\qquad
f_\lambda\in\mathcal S(\mathbb R^3),
\label{eq:smeared_radiative_operator_M33}
\end{equation}
or against smooth functions of compact support.  This smearing controls the
separate angular-coincidence and distributional structures and implements the
physical NSF state space.  It is not being used as an ultraviolet regulator
for Eq.~\eqref{eq:M33_plane_wave_UV_finite}.

\section{Unitarity and the BCH structure}
\label{sec:unitarity}

The perturbative relation between the incoming and outgoing graviton
operators must be compatible with a unitary scattering operator. We
therefore define the $\mathcal{S}$-matrix via a Hermitian generator $\delta T(\ep)$ as
\begin{equation}
  S(\ep) = e^{i\delta T(\ep)},
  \qquad
  \delta T(\ep) = \sum_{m=1}^{\infty}\ep^m\,\delta T_m,
\end{equation}
and impose the standard transformation law
\begin{equation}
  \aout{\pm} = S^\dagger \ain{\pm} S.
\end{equation}
Expanding with the Baker--Campbell--Hausdorff (BCH) formula~\cite{Hall2015,PeskinSchroeder} yields
\begin{equation}
  S^\dagger \ain{\pm} S
  =
  \ain{\pm}
  + i[\ain{\pm},\delta T]
  - \frac{1}{2} [[\ain{\pm},\delta T],\delta T]
  - \frac{i}{3!}[[[\ain{\pm},\delta T],\delta T],\delta T]
  + \dots.
\end{equation}
Because the first nontrivial NSF correction to the outgoing operator appears 
at second order, denoted by $\daout{2,\pm}$, the relative indexing of the 
generator is shifted by one. Matching terms order by order in $\ep$ leads to
\begin{align}
  \daout{2,\pm}
  &=
  i[\ain{\pm},\delta T_1],
  \label{eq:BCH1}
  \\
  \daout{3,\pm}
  &=
  i[\ain{\pm},\delta T_2]
  - \frac{1}{2} [[\ain{\pm},\delta T_1],\delta T_1],
  \label{eq:BCH2}
  \\
  \daout{4,\pm}
  &=
  i[\ain{\pm},\delta T_3]
  - \frac{1}{2}\Bigl(
  [[\ain{\pm},\delta T_1],\delta T_2]
  +
  [[\ain{\pm},\delta T_2],\delta T_1]
  \Bigr)
  \nonumber\\
  &\quad
  - \frac{i}{3!}
  [[[\ain{\pm},\delta T_1],\delta T_1],\delta T_1].
  \label{eq:BCH3}
\end{align}
These structural relations explicitly show how the NSF corrections
$\daout{n,\pm}$ are encoded within a perturbative unitary transformation.

\begin{theorem}[Constructive unitarity from NSF]
\label{thm:unitarity}
At each order $n$, the NSF construction yields outgoing
operator corrections satisfying
\begin{equation}
  \bigl(\daout{n,\pm}\bigr)^\dagger
  =
  \daoutd{n,\pm}.
  \label{eq:hermiticity_check}
\end{equation}
The BCH relations then uniquely determine the generators $\delta T_m$ to be
Hermitian, up to central terms that commute with all creation and
annihilation operators. These central terms correspond to an unobservable
global phase of $S$ and can be safely set to zero. With this convention,
\begin{equation}
  \delta T_m^\dagger = \delta T_m
  \qquad
  \text{for all } m,
\end{equation}
thereby guaranteeing that $S = e^{i\delta T}$ is unitary order by order.
\end{theorem}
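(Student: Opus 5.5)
The plan is to argue in two stages: first the Hermiticity statement \eqref{eq:hermiticity_check}, then an induction on $m$ that solves the BCH relations for $\delta T_m$ and shows the solution can be chosen Hermitian.

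\textbf{Stage 1: Hermiticity of the corrections.} This follows from the way the corrections are constructed. At every order the matched cut \eqref{eq:exact_matched_quantum_cut_delta_a} has the form ``positive-frequency term $+$ h.c.''. The same is true of the cone sources, since $\widehat h_1$, $\widehat Z_1$ and the $\widehat\Lambda_n$ are self-adjoint. The radiative map \eqref{eq:radiative_map_three_terms} acts on each Hermitian pair \eqref{eq:complete_branch_hermitian_pair} by sending the $e^{-ip\cdot x}$ half to $\delta a_n^{\rm out}$ and the $e^{+ip\cdot x}$ half to $\delta a_n^{\rm out\dagger}$.

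The second half is literally the adjoint of the first: the kernel is complex-conjugated and the normally ordered monomial is reversed. Normal ordering is imposed at the classical level via \eqref{eq:classical_quantum_operator_map}, so no commutator terms arise that could break this pairing. The helicity projectors $\eth^2$ and $\bar\eth^2$ are exchanged under conjugation together with the shear sectors, exactly as in \eqref{eq:deltaa3_dagger_complete}.

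The general-order recursion \eqref{eq:general_cut_recursion_coefficient} preserves this structure, because $[\delta a\,\widehat U_\omega]^\dagger=\widehat U_\omega^\dagger\,\delta a^\dagger$ holds order by order. This gives \eqref{eq:hermiticity_check} for all $n$.

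\textbf{Stage 2: induction on the generators.} Assume $\delta T_1,\dots,\delta T_{m-1}$ are already determined and Hermitian. The BCH relation at order $m+1$ reads
\begin{equation*}
i[a^{\rm in}_\pm,\delta T_m]=\delta a^{\rm out}_{m+1,\pm}-R_{m+1,\pm},
\end{equation*}
where $R_{m+1,\pm}$ is the sum of nested commutators containing only lower-order generators.

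Existence and uniqueness come from free-field structure: a normally ordered polynomial functional $\delta T_m$ is determined up to central terms by its commutators with all $a^{\rm in}_\pm(k)$ and all $a^{{\rm in}\dagger}_\pm(k)$. The commutator with $a$ acts as $\omega\,\delta/\delta a^\dagger$, and the commutator with $a^\dagger$ is then fixed by the relation for $a^\dagger$.

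For Hermiticity, take the adjoint of the order-$(m+1)$ relation for $a$. Using Stage 1 and the inductive hypothesis, this adjoint is precisely the relation obtained from $a^{\rm out\dagger}=S^\dagger a^{\rm in\dagger}S$, but with $\delta T_m$ replaced by $\delta T_m^\dagger$. Here one uses $(i[a,X])^\dagger=i[a^\dagger,X^\dagger]$, and the nested commutators in $R$ transform the same way because the lower generators are Hermitian. Consequently $\delta T_m$ and $\delta T_m^\dagger$ satisfy the same set of commutator equations with every $a$ and every $a^\dagger$. Their difference therefore commutes with all of them, so it is central, i.e.\ a c-number. That c-number contributes only a global phase to $S$, so it is set to zero, giving $\delta T_m=\delta T_m^\dagger$ and $S=e^{i\delta T}$ unitary order by order.

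\textbf{Main obstacle.} The delicate step is the existence half of the argument: showing that $\delta a^{\rm out}_{m+1}-R_{m+1}$ is actually a ``derivative'', i.e.\ that there is a $\delta T_m$ whose commutator reproduces it. This integrability condition is equivalent to preservation of the canonical commutator $[a^{\rm out},a^{\rm out\dagger}]=\omega\delta^{(3)}$ at that order. I would first try to verify it structurally from the canonical construction. Failing that, I would state it as the hypothesis under which the uniqueness and Hermiticity claims of the theorem hold, since the theorem's assertion ``uniquely determine'' presupposes that a solution exists.
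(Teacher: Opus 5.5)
Your proposal is correct and follows essentially the paper's argument. The paper likewise inducts on the shifted BCH relations and compares the adjoint of the $a^{\rm in}_\pm$ relation with the conjugate-sector relation, concluding that $\delta T_{n-1}^\dagger-\delta T_{n-1}$ commutes with every $a^{\rm in}_\pm$ and $a^{{\rm in}\dagger}_\pm$ and is therefore a removable central phase. It justifies the Hermiticity condition, in a remark after the proof, by noting that $\delta a^{\rm out}_n$ and $\delta a^{{\rm out}\dagger}_n$ are the positive- and negative-frequency projections of the real shear; your Stage~1 spells this out in more detail.

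The existence issue you flag is not resolved in the paper either: it is built into the setup by postulating $S=e^{i\delta T}$. One small slip: $\widehat\Lambda_n$ is not self-adjoint, since $\widehat\Lambda_n^\dagger=\widehat{\bar\Lambda}_n$; it is the combinations $\partial\widehat\Lambda\,\partial\widehat{\bar\Lambda}$ entering the cone source that are self-adjoint.
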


\begin{proof}
We begin with the base case at the lowest nontrivial order ($n=2$). From
Eq.~\eqref{eq:BCH1}, we have $\daout{2,\pm} = i[\ain{\pm},\delta T_1]$. 
Taking the Hermitian adjoint yields
\begin{equation}
  \bigl(\daout{2,\pm}\bigr)^\dagger
  =
  i[\aind{\pm},\delta T_1^\dagger].
\end{equation}
Conversely, the explicit NSF construction for the conjugate sector gives
\begin{equation}
  \daoutd{2,\pm}
  =
  i[\aind{\pm},\delta T_1].
\end{equation}
Imposing the NSF reality condition \eqref{eq:hermiticity_check}, we obtain
\begin{equation}
  [\aind{\pm},\delta T_1^\dagger-\delta T_1] = 0.
\end{equation}
An analogous relation is obtained by considering the interaction with the 
annihilation operators, yielding
\begin{equation}
  [\ain{\pm},\delta T_1^\dagger-\delta T_1] = 0.
\end{equation}
Thus, the operator difference $\delta T_1^\dagger-\delta T_1$ commutes with the 
full set of creation and annihilation operators. In the Fock representation, this 
constitutes a central term corresponding to a global phase of $S$. Setting this
phase to zero establishes the base case:
\begin{equation}
  \delta T_1^\dagger = \delta T_1.
\end{equation}

Next, assume the inductive hypothesis that the lower-order generators 
$\delta T_1, \dots, \delta T_{n-2}$ are Hermitian. The BCH relation at 
order $n$ can be written generally as
\begin{equation}
  \daout{n,\pm}
  =
  i[\ain{\pm},\delta T_{n-1}]
  +
  F_{n-1}(\delta T_1, \dots, \delta T_{n-2}; \ain{\pm}),
\end{equation}
where $F_{n-1}$ depends strictly on lower-order generators. By the inductive
hypothesis, its adjoint satisfies
\begin{equation}
  F_{n-1}^\dagger
  =
  F_{n-1}(\delta T_1, \dots, \delta T_{n-2}; \aind{\pm}).
\end{equation}
Taking the Hermitian adjoint of the order-$n$ equation then yields
\begin{equation}
  \bigl(\daout{n,\pm}\bigr)^\dagger
  =
  i[\aind{\pm},\delta T_{n-1}^\dagger]
  +
  F_{n-1}(\delta T_1, \dots, \delta T_{n-2}; \aind{\pm}).
\end{equation}
Meanwhile, the corresponding construction for the conjugate sector is given by
\begin{equation}
  \daoutd{n,\pm}
  =
  i[\aind{\pm},\delta T_{n-1}]
  +
  F_{n-1}(\delta T_1, \dots, \delta T_{n-2}; \aind{\pm}).
\end{equation}
Applying the NSF condition \eqref{eq:hermiticity_check}, we find
\begin{equation}
  [\aind{\pm},\delta T_{n-1}^\dagger-\delta T_{n-1}] = 0.
\end{equation}
Combined with the identical relation for $\ain{\pm}$, this implies that the
difference $\delta T_{n-1}^\dagger-\delta T_{n-1}$ is central. Eliminating this
non-physical phase choice allows us to set
\begin{equation}
  \delta T_{n-1}^\dagger = \delta T_{n-1},
\end{equation}
which completes the inductive step.
\end{proof}

The condition \eqref{eq:hermiticity_check} follows from the geometric NSF
construction.  The operator $\daout{n,\pm}$ is obtained from the
positive-frequency projection of the $n$th-order Bondi shear, whereas
$\daoutd{n,\pm}$ follows from the corresponding negative-frequency projection.
Because the Bondi shear is real in the Newman--Penrose sense, the two
projections are related by Hermitian conjugation.  The resulting asymptotic
operator corrections are therefore compatible with a perturbatively unitary
$S$-matrix.
\section{Structural parallels with modern on-shell methods}
\label{sec:on_shell_asymptotic_connection}

The asymptotic NSF construction uses null momenta, radiative data, and
physical helicity sectors, and therefore shares several organizing principles
with modern on-shell approaches to gauge and gravitational scattering
~\cite{BCFW2005,Bedford2005,BCJ2008,BCJ2010,Bern2019}.  The null-momentum and
helicity descriptions are directly related.  The comparisons with on-shell
recursion, color--kinematics duality, and exponentiated asymptotic phases are
more limited and are not identifications.

\subsection{Null momenta, physical helicities, and asymptotic data}

A four-dimensional null momentum may be parametrized by its energy and
celestial direction,
\begin{equation}
p^a = \omega \, l^a(z,\bar z),
\qquad
p^2 = 0,
\label{eq:celestial_null_momentum}
\end{equation}
or equivalently by a pair of spinors,
\begin{equation}
p_{\alpha\dot\alpha} = \lambda_\alpha \tilde{\lambda}_{\dot\alpha}.
\label{eq:spinor_helicity_null_momentum}
\end{equation}
These are two descriptions of the same null momentum, up to overall
normalization and little-group rescaling of the spinors. The celestial
coordinates $(z,\bar z)$ determine the null direction, while
$\omega$ fixes its scale. Spinor-helicity variables encode the same
kinematics while making little-group and helicity properties
manifest~\cite{Bern2019}.

The radiative NSF operators
\begin{equation}
a_\lambda(\omega,z,\bar z),
\qquad
\lambda = \pm 2,
\end{equation}
are labeled directly by on-shell momentum and physical graviton helicity.
The radiative maps involving $\eth^2$ and
$\bar\eth^2$ act on the celestial variables and extract the two
transverse helicity sectors.  Both descriptions organize scattering data in
terms of physical null states rather than longitudinal or gauge modes.

The ultraviolet analysis is carried out after projection onto the physical
helicity sector.  The angular projectors do not supply inverse powers of the
internal radial loop momentum $K \equiv |\vec{k}|$, but the transversality
condition ($p^\mu \epsilon_{\mu\nu}^\lambda(p) = 0$) removes longitudinal modes
from the outset.  Together with the mode-measure weight
$\nu(\omega) \sim K^{-3}$, this gives the
$\mathcal{O}(K^{-3})$ decay used in
Section~\ref{sec:two_to_two_one_loop_uv}.

The NSF scattering map is also constructed directly from radiative data at
null infinity. Its fundamental
variables are the on-shell modes of the Bondi shear and the geometrical
fields reconstructed from them. No independent off-shell radiative
degrees of freedom are introduced in defining the asymptotic
operators.

\subsection{Recursive organization and factorization}

BCFW recursion reconstructs tree amplitudes by applying an auxiliary
complex deformation to two external null momenta and using the
factorization poles of the deformed amplitude. When the contribution
at large complex deformation parameter vanishes, the undeformed
amplitude takes the schematic form
\begin{equation}
\mathcal{A}_n(0)
=
\sum_{P,h}
\frac{
 \mathcal{A}_L^{h}(z_P)\,
 \mathcal{A}_R^{-h}(z_P)
}{
 P^2
},
\label{eq:BCFW_structural_formula}
\end{equation}
where the sum runs over factorization channels and intermediate
helicities~\cite{BCFW2005}. Related recursion relations for
gravitational tree amplitudes were developed in
Ref.~\cite{Bedford2005}.

The NSF hierarchy is also recursive, but its mechanism is different.
The perturbative expansions
\begin{equation}
Z = Z_0 + \sum_{n\geq1}\epsilon^n Z_n,
\qquad
\sigma = \sum_{n\geq1}\epsilon^n\sigma_n,
\end{equation}
together with the corresponding expansions of $\Lambda$ and
$\Omega$, determine each order from lower-order radiative data and
the nonlinear null-cone sources. The outgoing corrections
$\delta a_{n,\lambda}^{\mathrm{out}}$ inherit this order-by-order
structure.

Both constructions recover higher-order quantities from lower-order on-shell
data, but the underlying procedures differ. NSF does not employ a complex
momentum deformation, a contour integral in a BCFW parameter, or an
assumption about a boundary contribution at large complex momentum.
Its recursion follows from the perturbative solution of the null-cut
field equations and from the matching of the advanced and retarded
asymptotic geometries.

\subsection{Weaker structural analogies}

Color--kinematics duality organizes gauge-theory amplitudes in terms of
color factors $c_i$ and kinematic numerators $n_i$ satisfying parallel
Jacobi relations~\cite{BCJ2008}:
\begin{equation}
\mathcal{A}_n^{\mathrm{gauge}}
=
g^{n-2} \sum_i \frac{c_i n_i}{D_i}.
\end{equation}
Replacing the color factors by a second set of kinematic numerators $\tilde{n}_i$
then produces the gravitational double-copy representation~\cite{BCJ2010,Bern2019}:
\begin{equation}
\mathcal{M}_n^{\mathrm{grav}}
=
i \left(\frac{\kappa_{\mathrm{g}}}{2}\right)^{n-2} \sum_i \frac{n_i \tilde{n}_i}{D_i},
\label{eq:BCJ_double_copy_structural}
\end{equation}
where $\kappa_{\mathrm{g}}$ denotes the gravitational coupling constant.

The nonlinear NSF sources are likewise bilinear and multilinear in
lower-order radiative fields. This product structure is suggestive,
but it does not constitute a double-copy construction. The present
formulation contains no color factors, no identified kinematic
numerators satisfying Jacobi identities, and no replacement rule
relating a gauge-theory amplitude to the NSF kernels. A genuine
connection with color--kinematics duality would require these additional
structures to be derived explicitly.

A more tentative comparison concerns the exponentiated cut operator
\begin{equation}
\widehat{U}_\omega
=
\exp\!\left[-i\omega\widehat{Z}(x,z,\bar z)\right].
\label{eq:cut_phase_on_shell_section}
\end{equation}
Like other exponentiated phases used in asymptotic or eikonal
descriptions, it packages an infinite sequence of perturbative
contributions. It is not identified here with a Wilson line:
$\widehat{U}_\omega$ is the exponential of the NSF cut operator, not a
path-ordered exponential of a gauge connection.

Neither the exponential form nor its formal resemblance to eikonal
structures is used to prove ultraviolet finiteness or unitarity. The
ultraviolet result obtained in this work follows instead from the
restriction to physical helicities, the mode-measure weight $\nu(\omega) \sim \kappa^{-3}$,
and the resulting $\mathcal{O}(\kappa^{-2})$ radial behavior of the complete
one-loop integrand.

The closest connection between the two frameworks lies in their use of
null-momentum variables, physical helicities, and asymptotic on-shell data.
The recursive organization provides a nonliteral parallel with BCFW
factorization.  Connections with the double copy or Wilson-line structures
would require additional mathematical input.

\section{Discussion and conclusions}
\label{sec:discussion_and_conclusions}

We have constructed the third-order NSF scattering map in a perturbative
expansion around Minkowski spacetime and analyzed the connected one-loop
sector generated by the corresponding asymptotic operator corrections. The
construction yields three main results.

First, retaining the rational dependence of the exact NSF equations on the
optical variables produces a consistent recursive hierarchy for the cut
function, the conformal factor, and the nonlinear cone sources. At third
order, the outgoing radiative correction is determined only after combining
the recursive cut contribution with the future and antipodally mapped past
cone contributions. Bose symmetrization, normal ordering, and radiative
helicity projection are then applied to the complete matching relation.
This order of operations is essential because connectedness is determined at
the level of the matrix element and cannot be assigned to the individual
geometric terms beforehand.

Second, the connected product
$\delta a_3^{\mathrm{out}}\delta a_3^{\mathrm{out}}$ contains a contribution with
one-loop topology. Each third-order correction is cubic in the incoming
radiative operators and therefore acts as an effective quartic vertex once
its outgoing leg is included. In the surviving Wick sector, two such vertices
are joined by two on-shell radiative lines. After the contraction delta
functions and the two vertex constraints have been used, one internal
three-momentum remains unrestricted. Thus,
\begin{equation}
I=2,
\qquad
V=2,
\qquad
L=I-V+1=1.
\end{equation}
The ultraviolet estimate depends crucially on the normalization of the
radiative modes. Rewriting each mode expansion in terms of the
Lorentz-invariant measure gives
\begin{equation}
\dd^2z\,\frac{\dd\omega}{2\pi}
\sqrt{\frac{4\pi G}{\omega}}
=
\dd\mu(k)\,\nu(\omega),
\qquad
\nu(\omega)
=
8\pi^2\sqrt{4\pi G}\,\omega^{-3/2}.
\end{equation}
The two internal radiative modes therefore contribute $K^{-3}$ to each
third-order vertex in the ultraviolet loop routing. The complete reduced
geometric vertex grows at most as $K$, so the fully normalized vertex behaves
as $K^{-2}$. The product of the two vertices is consequently $\mathcal{O}(K^{-4})$.
Since the remaining on-shell phase-space measure is radially $\mathcal{O}(\dd K)$, the
large-momentum behavior is
\begin{equation}
\mathcal M^{(33)}_{\mathrm{UV}}
\sim
\int^\infty \frac{\dd K}{K^4},
\end{equation}
which is convergent. The fully normalized plane-wave kernel in this connected
one-loop sector is therefore radially ultraviolet finite.

This estimate is performed after projection onto the physical graviton
helicity components. The helicity projector acts on the celestial angular
variables and does not introduce additional inverse powers of the loop
momentum. The ultraviolet convergence follows from the complete mode
normalization and the large-$K$ behavior of the full third-order vertex,
rather than from a single affine denominator or from the perturbative
unitarity construction.

The plane-wave estimate should be distinguished from the definition of the
physical radiative state space. In perturbative NSF, the classical Bondi data
are restricted so that the reconstructed cuts remain smooth closed
deformations of the Minkowski cuts and the total radiated Bondi energy is
finite. After quantization, the shear modes are operator-valued
distributions, and physical states are obtained by smearing them with smooth
angular profiles and rapidly decreasing or compactly supported energy
profiles. This smearing is part of the definition of the state space and is
not required as an ultraviolet regulator for the radial bound obtained
above.

Third, the asymptotic operator corrections preserve Hermitian conjugation,
\begin{equation}
(\delta a)^\dagger=\delta(a^\dagger),
\label{eq:adjoint_preservation_concl}
\end{equation}
and admit an order-by-order Baker--Campbell--Hausdorff organization,
\begin{equation}
a^{\mathrm{out}}
=
\mathcal S^\dagger a^{\mathrm{in}}\mathcal S.
\label{eq:unitary_scattering_map_concl}
\end{equation}
The connected part of $\delta a_n$ determines the corresponding connected
matrix elements of the Hermitian BCH generator $\mathcal T_n$. The NSF
operator map is therefore compatible with a perturbatively unitary
scattering framework. This compatibility is logically independent of the
ultraviolet estimate, which follows from the explicit power counting of the
fully normalized loop kernel.

The scope of the result is specific. It applies to the connected one-loop
sector generated by
$\delta a_3^{\mathrm{out}}\delta a_3^{\mathrm{out}}$ after radiative projection on the
Minkowski null cone. It is not an all-orders finiteness theorem, a statement
about the complete nonlinear NSF matching relation before projection, or a
proof that arbitrary one-loop gravitational amplitudes are ultraviolet
finite. Moreover, the nonlinear cuts $Z_2,Z_3,\ldots$ should not be
interpreted as free solutions of the flat wave equation. Only the linear cut
$Z_1$ has that direct interpretation; the higher-order cuts are geometric
quantities determined recursively by the nonlinear NSF equations.

Within these limits, the third-order NSF hierarchy provides an intrinsic
null-infinity construction of an on-shell one-loop topology whose fully
normalized plane-wave kernel is radially ultraviolet finite. Extending the
power-counting analysis to the mixed
$\mathcal M^{(24)}+\mathcal M^{(42)}$ sector and to higher perturbative orders
is left for subsequent work.

\appendix
\section{Sign rule, Hermitian pairs, and Bose reduction}
\label{app:sign_hermitian_tables}
This appendix fixes the branch convention used in the radiative extraction
and relates Hermitian conjugation, Bose symmetrization, and the on-shell
radiative reduction.  The sign classification is performed before any
individual Fourier component is placed on shell.

For every elementary mode we write
\begin{equation}
a_i e^{-ik_i\cdot x},
\qquad
\bar a_i e^{+ik_i\cdot x},
\qquad
k_i^0=\omega_i=|\vec k_i|,
\label{app:eq:elementary_modes_sign_rule}
\end{equation}
and introduce a sign $\epsilon_i=+1$ for the first term and
$\epsilon_i=-1$ for the second. A product of $n$ modes therefore has the
form
\begin{equation}
\mathcal A_{\boldsymbol\epsilon}
e^{-iK_{\boldsymbol\epsilon}\cdot x},
\qquad
K_{\boldsymbol\epsilon}^a =
\sum_{i=1}^{n}\epsilon_i k_i^a,
\qquad
\boldsymbol\epsilon=(\epsilon_1,\ldots,\epsilon_n).
\label{app:eq:general_K_sign_rule}
\end{equation}
At the classical level the amplitudes commute, and all $\bar a_i$ are moved
to the left. The normal quantization map is then defined by
\begin{equation}
\bar a_{i_1}\cdots\bar a_{i_r}
a_{j_1}\cdots a_{j_s}
\longmapsto
a_{i_1}^{\dagger}\cdots a_{i_r}^{\dagger}
a_{j_1}\cdots a_{j_s},
\label{app:eq:normal_quantization_map}
\end{equation}
without generating commutator terms. Hermitian conjugation reverses all
signs,
\begin{equation}
\boldsymbol\epsilon
\longleftrightarrow
-\boldsymbol\epsilon,
\qquad
K_{-\boldsymbol\epsilon} =
-K_{\boldsymbol\epsilon},
\qquad
\mathcal O_{-\boldsymbol\epsilon} =
\mathcal O_{\boldsymbol\epsilon}^{\dagger}.
\label{app:eq:hermitian_sign_reversal}
\end{equation}
Thus the $2^n$ sign assignments are first grouped into Hermitian pairs
$\{\boldsymbol\epsilon,-\boldsymbol\epsilon\}$. For one representative of
each pair, the off-shell spatial combination is made radiative according to
\begin{equation}
K_{\boldsymbol\epsilon}^a
\longmapsto
p_{\boldsymbol\epsilon}^a =
\left(
|\vec K_{\boldsymbol\epsilon}|,
\vec K_{\boldsymbol\epsilon}
\right),
\qquad
p_{\boldsymbol\epsilon}^2=0.
\label{app:eq:general_radiative_sign_map}
\end{equation}
The coefficient multiplying
$e^{-ip_{\boldsymbol\epsilon}\cdot x}$ contributes to
$\delta a_n^{\mathrm{out}}$, while the Hermitian-conjugate coefficient
multiplying $e^{+ip_{\boldsymbol\epsilon}\cdot x}$ contributes to
$\delta a_n^{{\mathrm{out}}\dagger}$.

\subsection{Bose symmetrization and representative independence}
\label{app:representative_independence}
Let $\mathcal R_{[\boldsymbol\epsilon]}$ denote the complete Hermitian
contribution associated with a sign pair,
\begin{align}
\mathcal R_{[\boldsymbol\epsilon]} =
\int \prod_{i=1}^{n}d\mu_i
\Big[&
\mathcal C_{\boldsymbol\epsilon}
\mathcal O_{\boldsymbol\epsilon}
e^{-iK_{\boldsymbol\epsilon}\cdot x}
\nonumber \\
&+
\mathcal C_{\boldsymbol\epsilon}^{\dagger}
\mathcal O_{\boldsymbol\epsilon}^{\dagger}
e^{+iK_{\boldsymbol\epsilon}\cdot x}
\Big].
\label{app:eq:complete_hermitian_pair}
\end{align}
The radiative reduction acts on the complete pair as
\begin{align}
\Pi_{\mathrm{rad}}
\mathcal R_{[\boldsymbol\epsilon]} =
\int \prod_{i=1}^{n}d\mu_i
\Big[&
\mathcal C_{\boldsymbol\epsilon}
\mathcal O_{\boldsymbol\epsilon}
e^{-ip_{\boldsymbol\epsilon}\cdot x}
\nonumber \\
&+
\mathcal C_{\boldsymbol\epsilon}^{\dagger}
\mathcal O_{\boldsymbol\epsilon}^{\dagger}
e^{+ip_{\boldsymbol\epsilon}\cdot x}
\Big].
\label{app:eq:radiative_complete_pair}
\end{align}
Let $P_\pi$ denote a permutation of the complete bosonic labels
\begin{equation}
i=(\vec k_i,\lambda_i),
\qquad
\pi\in S_n,
\label{app:eq:complete_bosonic_labels}
\end{equation}
and define the Bose symmetrizer by
\begin{equation}
\mathcal S_{\mathrm B} =
\frac{1}{n!}
\sum_{\pi\in S_n}P_\pi .
\label{app:eq:Bose_symmetrizer}
\end{equation}
Because the measure
$\prod_i d\mu_i$ is permutation invariant and the map
\begin{equation}
\vec K_{\boldsymbol\epsilon}
\longmapsto
\left(
|\vec K_{\boldsymbol\epsilon}|,
\vec K_{\boldsymbol\epsilon}
\right)
\end{equation}
is covariant under permutations of the complete labels, the Bose
symmetrizer and the radiative reduction commute on the complete Hermitian
expression:
\begin{equation}
\boxed{
\Pi_{\mathrm{rad}}
\mathcal S_{\mathrm B}\mathcal R_n =
\mathcal S_{\mathrm B}
\Pi_{\mathrm{rad}}\mathcal R_n .
}
\label{app:eq:Pi_rad_Bose_commutation}
\end{equation}
Bose symmetrization may therefore be performed immediately before or after
the radiative reduction, provided that the complete Hermitian pair is
retained.

The choice of a representative of each Hermitian pair is therefore a
bookkeeping convention. Replacing
$\boldsymbol\epsilon$ by $-\boldsymbol\epsilon$ does not define a new
independent channel: it amounts to rewriting the same self-adjoint
contribution after Hermitian conjugation and the corresponding relabeling of
the momentum variable and its coefficient. This change must be performed
consistently on the complete pair; the opposite sign assignment must not be
treated as an additional positive-frequency contribution at fixed output
momentum.

Once the standard positive-energy convention
\begin{equation}
p^0=|\vec p|,
\qquad
e^{-ip\cdot x}
\longleftrightarrow
\delta a_n^{\mathrm{out}}(\vec p)
\label{app:eq:positive_energy_branch_convention}
\end{equation}
has been fixed, the separation between
$\delta a_n^{\mathrm{out}}$ and
$\delta a_n^{{\mathrm{out}}\dagger}$ is unambiguous. Different choices of
representative merely give different descriptions of the same
Bose-symmetrized self-adjoint cut.

\subsection{Second order}
\label{app:sign_table_deltaa2}
At second order there are four sign assignments.
Table~\ref{tab:deltaa2_signs} displays all of them before Hermitian
reduction.

\begin{table}[h]
\caption{Second-order sign assignments and Hermitian partners.}
\label{tab:deltaa2_signs}
\centering
\small
\begin{tabular}{c c c c}
\toprule
$(\epsilon_1,\epsilon_2)$
& $K_{\epsilon_1\epsilon_2}$
& normally ordered monomial
& Hermitian partner \\
\midrule
$(+,+)$
& $k_1+k_2$
& $a_1a_2$
& $(-,-)$ \\
$(+,-)$
& $k_1-k_2$
& $a_2^\dagger a_1$
& $(-,+)$ \\
$(-,+)$
& $-k_1+k_2$
& $a_1^\dagger a_2$
& $(+,-)$ \\
$(-,-)$
& $-k_1-k_2$
& $a_1^\dagger a_2^\dagger$
& $(+,+)$ \\
\bottomrule
\end{tabular}
\end{table}

The two Hermitian pairs are therefore
\begin{equation}
(+,+)\longleftrightarrow(-,-),
\qquad
(+,-)\longleftrightarrow(-,+).
\label{app:eq:deltaa2_hermitian_pairs}
\end{equation}
At this order the commutation relation
\eqref{app:eq:Pi_rad_Bose_commutation} can be seen directly. In the sum
channel,
\begin{equation}
P_{12}
\left(
\vec k_1+\vec k_2
\right) =
\vec k_1+\vec k_2,
\label{app:eq:deltaa2_sum_exchange}
\end{equation}
whereas in the difference channel,
\begin{equation}
P_{12}
\left(
\vec k_1-\vec k_2
\right) =
-\left(
\vec k_1-\vec k_2
\right).
\label{app:eq:deltaa2_difference_exchange}
\end{equation}
Thus $P_{12}$ exchanges the two members of the mixed Hermitian pair. Since
$d\mu_1d\mu_2$ is invariant under $1\leftrightarrow2$, the exchange is only
a relabeling of dummy integration variables in the complete self-adjoint
expression. Therefore,
\begin{equation}
\Pi_{\mathrm{rad}}
\mathcal S_{12}\mathcal R_2 =
\mathcal S_{12}
\Pi_{\mathrm{rad}}\mathcal R_2,
\qquad
\mathcal S_{12} =
\frac{1}{2}
\left(
1+P_{12}
\right).
\label{app:eq:deltaa2_sym_projection_commute}
\end{equation}
Thus, at second order, symmetrization before or after the radiative reduction
gives the same operator-valued cut.

Choosing $(+,+)$ and $(+,-)$ as convenient representatives gives the sum
and difference channels,
\begin{equation}
\vec K_{\mathrm{sum}} =
\vec k_1+\vec k_2,
\qquad
\vec K_{\mathrm{diff}} =
\vec k_1-\vec k_2.
\label{app:eq:deltaa2_two_classes}
\end{equation}
The second ordering of the difference channel is obtained through
$1\leftrightarrow2$ and is already included by Bose symmetrization. Hence
\begin{equation}
4\ \text{sign assignments}
\longrightarrow
2\ \text{Hermitian pairs}
\longrightarrow
2\ \text{Bose classes}.
\label{app:eq:deltaa2_counting}
\end{equation}
This is the precise second-order rule that is generalized below.

\subsection{Third order}
\label{app:sign_table_deltaa3}
At third order there are eight sign assignments. They are displayed in
Table~\ref{tab:deltaa3_signs}; the label in the last column identifies the
member obtained by simultaneous sign reversal.

\begin{table}[h]
\caption{Third-order sign assignments, normally ordered monomials, and
Hermitian partners.}
\label{tab:deltaa3_signs}
\centering
\small
\begin{tabular}{c c c c}
\toprule
$(\epsilon_1,\epsilon_2,\epsilon_3)$
& $K_{\boldsymbol\epsilon}$
& normally ordered monomial
& Hermitian partner \\
\midrule
$(+,+,+)$
& $k_1+k_2+k_3$
& $a_1a_2a_3$
& $(-,-,-)$ \\
$(+,+,-)$
& $k_1+k_2-k_3$
& $a_3^\dagger a_1a_2$
& $(-,-,+)$ \\
$(+,-,+)$
& $k_1-k_2+k_3$
& $a_2^\dagger a_1a_3$
& $(-,+,-)$ \\
$(-,+,+)$
& $-k_1+k_2+k_3$
& $a_1^\dagger a_2a_3$
& $(+,-,-)$ \\
\midrule
$(-,-,-)$
& $-k_1-k_2-k_3$
& $a_1^\dagger a_2^\dagger a_3^\dagger$
& $(+,+,+)$ \\
$(-,-,+)$
& $-k_1-k_2+k_3$
& $a_1^\dagger a_2^\dagger a_3$
& $(+,+,-)$ \\
$(-,+,-)$
& $-k_1+k_2-k_3$
& $a_1^\dagger a_3^\dagger a_2$
& $(+,-,+)$ \\
$(+,-,-)$
& $k_1-k_2-k_3$
& $a_2^\dagger a_3^\dagger a_1$
& $(-,+,+)$ \\
\bottomrule
\end{tabular}
\end{table}

The four Hermitian pairs are
\begin{align}
(+,+,+)
&\longleftrightarrow
(-,-,-),
\nonumber \\
(+,+,-)
&\longleftrightarrow
(-,-,+),
\nonumber \\
(+,-,+)
&\longleftrightarrow
(-,+,-),
\nonumber \\
(-,+,+)
&\longleftrightarrow
(+,-,-).
\label{app:eq:deltaa3_hermitian_pairs}
\end{align}
A convenient set of representatives for the negative-exponential branch is
\begin{equation}
(+,+,+),
\qquad
(+,+,-),
\qquad
(+,-,+),
\qquad
(-,+,+).
\label{app:eq:deltaa3_representatives}
\end{equation}
This is a bookkeeping convention, as established in
Sec.~\ref{app:representative_independence}. Choosing the opposite member of
a pair, together with the corresponding Hermitian conjugation and momentum
relabeling, reconstructs the same Bose-symmetrized self-adjoint cut.

The four representatives give the spatial momenta
\begin{align}
\vec K_0
&=
\vec k_1+\vec k_2+\vec k_3,
\nonumber \\
\vec K_1
&=
\vec k_1+\vec k_2-\vec k_3,
\nonumber \\
\vec K_2
&=
\vec k_1-\vec k_2+\vec k_3,
\nonumber \\
\vec K_3
&=
-\vec k_1+\vec k_2+\vec k_3.
\label{app:eq:deltaa3_four_K}
\end{align}
The first representative belongs to the three-annihilator sector. The other
three contain one creator and two annihilators and are related by
permutations of the complete bosonic labels
$i=(\vec k_i,\lambda_i)$. They must nevertheless be retained separately
while the recursive-cut and cone kernels are being assembled, because the
intermediate kernels distinguish the linear leg from the second-order block.
We carry out the calculation in the following order:
\begin{equation}
\int d\mu_1d\mu_2d\mu_3
\quad\longrightarrow\quad
\text{group the terms according to }
\vec K_0,\vec K_1,\vec K_2,\vec K_3
\quad\longrightarrow\quad
K_i^0=|\vec K_i|.
\label{app:eq:deltaa3_calculation_order}
\end{equation}
Because of Eq.~\eqref{app:eq:Pi_rad_Bose_commutation}, the final Bose
symmetrization may equivalently be performed immediately before or
immediately after the replacement $K_i^0=|\vec K_i|$.

Only after the recursive future-cut term, the future-cone term, and the
antipodal-past-cone term have been added is the mixed Bose kernel formed:
\begin{align}
\mathcal C_{\mathrm{mix}}^{\mathrm B} =
\operatorname{Sym}_{\mathrm B}
\Big[
\mathcal C_{12;3}
+
\mathcal C_{13;2}
+
\mathcal C_{23;1}
\Big].
\label{app:eq:deltaa3_mixed_Bose_kernel}
\end{align}
The final third-order counting is therefore
\begin{equation}
8\ \text{sign assignments}
\longrightarrow
4\ \text{Hermitian pairs}
\longrightarrow
2\ \text{Bose classes},
\label{app:eq:deltaa3_counting}
\end{equation}
namely the all-sum class and the mixed class with one relative minus sign.
For the explicit calculation, however, the four representatives in
Eq.~\eqref{app:eq:deltaa3_representatives} are retained until the three
geometrical contributions have been assembled and the complete
Bose-symmetrized radiative operator has been obtained.

\section{Explicit paired-cone third-order kernels}
\label{app:explicit_C3_kernels}
This appendix gives the explicit paired-cone contribution to the
third-order source. Throughout this appendix, the term
\emph{paired cone} denotes the sum of the future-cone contribution and
the antipodally mapped past-cone contribution:
\begin{equation}
\widehat{Z}_{3,\mathrm{cone}}^{\mathrm{pair}}
\equiv
\widehat{Z}_{3,\mathrm{cone}}^{+}
+
\mathcal{A}\widehat{Z}_{3,\mathrm{cone}}^{-}.
\label{app:eq:paired_cone_definition}
\end{equation}
The recursive-cut contribution is not part of the paired cone. The
complete third-order kernel is therefore
\begin{equation}
\mathcal{C}_{\eta_2\eta_3}^{(3),\mathrm{tot}}
=
\mathcal{C}_{\eta_2\eta_3}^{\mathrm{cut}}
+
\mathcal{C}_{\eta_2\eta_3}^{\mathrm{pair}}.
\label{app:eq:C3_total_cut_pair}
\end{equation}
The kernels constructed below provide only the paired-cone summand and
must be added to the recursive-cut kernel in
Eq.~\eqref{eq:C3_cut_all_branches} before the complete radiative
extraction is performed.
For either the future or the past cone, the third-order cone equation
has the common form
\begin{align}
\bar{\eth}^2\eth^2\widehat{Z}_{3,\mathrm{cone}}^{\pm}
=
-\int_r^\infty dr'\,
\Big[
&2\eth\bar{\eth}\delta\widehat{\Omega}_3^{\pm}
+
\eta^{ab}
\partial_a\widehat{\Lambda}_1^{\pm}
\partial_b\widehat{\Lambda}_2^{\pm\dagger}
\nonumber \\
&+
\eta^{ab}
\partial_a\widehat{\Lambda}_2^{\pm}
\partial_b\widehat{\Lambda}_1^{\pm\dagger}
+
\widehat{h}_1^{ab}
\partial_a\widehat{\Lambda}_1^{\pm}
\partial_b\widehat{\Lambda}_1^{\pm\dagger}
\Big].
\label{app:eq:Z3cone_source}
\end{align}
After applying the antipodal map to the past-cone contribution, both
cone terms are expressed in the same angular variables. Their affine
integrals then combine into the principal-value prescription derived
below.

For a Fourier branch labelled by $(\eta_2,\eta_3)$, we define
\begin{align}
K_{\eta_2\eta_3}^a(1,2,3)
&\equiv
k_1^a+\eta_2 k_2^a+\eta_3 k_3^a,
\label{app:eq:K_eta_def}
\\
\Omega_{\eta_2\eta_3}(1,2,3)
&\equiv
\omega_1+\eta_2\omega_2+\eta_3\omega_3,
\qquad \eta_2,\eta_3=\pm1.
\label{app:eq:Omega_eta_def}
\end{align}
The affine integration is the same generator integral that appears at
second order. After combining the future and antipodal-past cones, it gives
\begin{align}
\int_0^\infty ds\,
\left[
 e^{+is\,l^{-}(z')\cdot K_{\eta_2\eta_3}}
-e^{-is\,l^{-}(z')\cdot K_{\eta_2\eta_3}}
\right]
&=
 i\left[
 \frac{1}{l^{-}(z')\cdot K_{\eta_2\eta_3}+i0}
+
 \frac{1}{l^{-}(z')\cdot K_{\eta_2\eta_3}-i0}
\right]
\nonumber \\
&=
2i\,\mathrm{PV}\frac{1}{l^{-}(z')\cdot K_{\eta_2\eta_3}}.
\label{app:eq:affine_PV_C3}
\end{align}

\subsection{Channel map and normal-ordered operators}
\label{app:subsec:C3_channel_map}

The four independent frequency branches are related to the channels of the
third-order shear calculation as follows:
\begin{equation}
\begin{array}{c|c|c|c}
(\eta_2,\eta_3)
& K_{\eta_2\eta_3}
& \text{classical channel}
& \text{normal-ordered operator}
\\ \hline
(-,+)
& k_1-k_2+k_3
& \sigma^-\bar{\sigma}^-\sigma^-
& \widehat{\mathcal{O}}_{-+}=a_{\lambda_2}^{\mathrm{in}\dagger}(k_2)
  a_{\lambda_1}^{\mathrm{in}}(k_1)a_{\lambda_3}^{\mathrm{in}}(k_3)
\\[2pt]
(-,-)
& k_1-k_2-k_3
& \sigma^-\bar{\sigma}^-\bar{\sigma}^-
& \widehat{\mathcal{O}}_{--}=a_{\lambda_2}^{\mathrm{in}\dagger}(k_2)
  a_{\lambda_3}^{\mathrm{in}\dagger}(k_3)a_{\lambda_1}^{\mathrm{in}}(k_1)
\\[2pt]
(+,-)
& k_1+k_2-k_3
& \sigma^-\sigma^-\bar{\sigma}^-
& \widehat{\mathcal{O}}_{+-}=a_{\lambda_3}^{\mathrm{in}\dagger}(k_3)
  a_{\lambda_1}^{\mathrm{in}}(k_1)a_{\lambda_2}^{\mathrm{in}}(k_2)
\\[2pt]
(+,+)
& k_1+k_2+k_3
& \sigma^-\sigma^-\sigma^-
& \widehat{\mathcal{O}}_{++}=a_{\lambda_1}^{\mathrm{in}}(k_1)
  a_{\lambda_2}^{\mathrm{in}}(k_2)a_{\lambda_3}^{\mathrm{in}}(k_3)
\end{array}
\label{app:eq:C3_channel_table}
\end{equation}
Normal ordering in Eq.~\eqref{app:eq:C3_channel_table} is imposed directly
on the quantized classical monomials: all creation operators are written to
the left of all annihilation operators. No commutator term is generated by
this prescription.

\subsection{Second-order form factors entering the cubic kernels}
\label{app:subsec:C3_form_factors}

The difference and sum form factors inherited from $\Lambda_2^{-}$ are
\begin{align}
\overline{\mathcal{H}}_{\mathrm{diff}}(z;k_2,k_3)
&=
\overline{\mathcal{C}}_{\mathrm{kn}}(z;k_2,k_3)
+
\overline{S}^{-}_{\mathrm{diff}}(z;k_2,k_3)
+
\overline{Y^2_{lI}(z)}\,
\overline{S}^{-}_{lI,\Omega+A}(k_2,k_3),
\label{app:eq:H_diff_def}
\\
\overline{\mathcal{H}}_{\mathrm{sum}}(z;k_2,k_3)
&=
\overline{\mathcal{C}}_{\mathrm{kn,sum}}(z;k_2,k_3)
+
\overline{S}^{-}_{\mathrm{sum}}(z;k_2,k_3)
+
\overline{Y^2_{lI}(z)}\,
\overline{S}^{-}_{lI,B}(k_2,k_3).
\label{app:eq:H_sum_def}
\end{align}
The unbarred functions are defined by complex conjugation,
\begin{equation}
\mathcal{H}_{\mathrm{diff}}
\equiv
\left(\overline{\mathcal{H}}_{\mathrm{diff}}\right)^*,
\qquad
\mathcal{H}_{\mathrm{sum}}
\equiv
\left(\overline{\mathcal{H}}_{\mathrm{sum}}\right)^*.
\label{app:eq:H_unbarred_def}
\end{equation}
The cone components entering these expressions are
\begin{align}
S^{-}_{lI,\Omega+A}(k_2,k_3)
&=
\oint d^2z\,Y^0_{lI}(z)\,
\mathrm{PV}\frac{
\eth\bar{\eth}\,S_\Omega(k_2,k_3;z)
+S_A(k_2,k_3;z)
+S_\Lambda(k_2,k_3;z)
}{l^{-}(z)\cdot(k_2-k_3)},
\label{app:eq:S_lI_OA}
\\
S^{-}_{lI,B}(k_2,k_3)
&=
\oint d^2z\,Y^0_{lI}(z)\,
\mathrm{PV}\frac{S_B(k_2,k_3;z)}{l^{-}(z)\cdot(k_2+k_3)}.
\label{app:eq:S_lI_B}
\end{align}
For reference, the explicit second-order angular factors are
\begin{align}
S_A(k_2,k_3;z)
&=
\frac{l^{+}(\widehat{k}_2)\cdot l^{+}(\widehat{k}_3)}
     {l^{+}(z)\cdot(k_2-k_3)}
G_{2,+2}(z,\widehat{k}_2)G_{-2,-2}(z,\widehat{k}_3),
\label{app:eq:S_A_explicit}
\\
S_B(k_2,k_3;z)
&=
\frac{l^{+}(\widehat{k}_2)\cdot l^{+}(\widehat{k}_3)}
     {l^{+}(z)\cdot(k_2+k_3)}
\delta^2(z-\widehat{k}_2)G_{-2,-2}(z,\widehat{k}_3),
\label{app:eq:S_B_explicit}
\\
S_\Omega(k_2,k_3;z)
&=
G_{2,+2}(z,\widehat{k}_3)G_{-2,-2}(z,\widehat{k}_2)
F_\Omega(k_2,k_3;z).
\label{app:eq:S_Omega_explicit}
\end{align}
The known-cut difference kernel is
\begin{align}
\mathcal{C}_{\mathrm{kn}}(\widehat{k}_2,\widehat{k}_3;z)
={}&
-i\omega_3\,\delta^2(z-\widehat{k}_2)\,
Z^{-}_{1,lI}(\widehat{k}_3)Y^{-2}_{lI}(\widehat{k}_3)
\nonumber \\
&+
G_{2,0}(z,\widehat{k}_2)\,
\frac{\omega_2\omega_3}{\omega_2+\omega_3}.
\label{app:eq:C_kn_explicit}
\end{align}
The symbol $\mathcal{C}_{\mathrm{kn,sum}}$ denotes the corresponding
sum-frequency component of the same known-cut contribution. No new affine
integration is hidden in $\mathcal{C}_{\mathrm{kn}}$ or
$\mathcal{C}_{\mathrm{kn,sum}}$.

\subsection{Explicit connected kernels}
\label{app:subsec:C3_explicit}

The channel $(-,+)$, corresponding to channel~(I), is
\begin{align}
\mathcal{T}^{(I)}_3(k_1,k_2,k_3;z)
={}&
G_{2,0}(z,\widehat{k}_1)\,
\bigl[k_1\cdot(k_2-k_3)\bigr]
2i\mathrm{PV}\frac{1}{l^{-}(\widehat{k}_1)\cdot(k_1-k_2+k_3)}
\nonumber \\
&\times
\overline{\mathcal{H}}_{\mathrm{diff}}
(\widehat{k}_1;k_2,k_3).
\label{app:eq:T3_I}
\end{align}
The channel $(-,-)$, corresponding to channel~(II), is
\begin{align}
\mathcal{T}^{(II)}_3(k_1,k_2,k_3;z)
={}&
G_{2,0}(z,\widehat{k}_1)\,
\bigl[k_1\cdot(k_2+k_3)\bigr]
2i\mathrm{PV}\frac{1}{l^{-}(\widehat{k}_1)\cdot(k_1-k_2-k_3)}
\nonumber \\
&\times
\overline{\mathcal{H}}_{\mathrm{sum}}
(\widehat{k}_1;k_2,k_3).
\label{app:eq:T3_II}
\end{align}
The channel $(+,-)$, corresponding to channel~(III), is
\begin{align}
\mathcal{T}^{(III)}_3(k_1,k_2,k_3;z)
={}&
\oint d^2z'\,G_{2,0}(z,z')\,
G_{2,+2}(z',\widehat{k}_1)\,
\bigl[k_1\cdot(k_2-k_3)\bigr]
2i\mathrm{PV}\frac{1}{l^{-}(z')\cdot(k_1+k_2-k_3)}
\nonumber \\
&\hspace{22mm}\times
\overline{\mathcal{H}}_{\mathrm{diff}}(z';k_2,k_3)
\nonumber \\[2pt]
&+
G_{2,0}(z,\widehat{k}_3)\,
\bigl[(k_1+k_2)\cdot k_3\bigr]
2i\mathrm{PV}\frac{1}{l^{-}(\widehat{k}_3)\cdot(k_1+k_2-k_3)}
\nonumber \\
&\hspace{22mm}\times
\mathcal{H}_{\mathrm{diff}}(\widehat{k}_3;k_1,k_2).
\label{app:eq:T3_III}
\end{align}
The channel $(+,+)$, corresponding to channel~(IV), is
\begin{align}
\mathcal{T}^{(IV)}_3(k_1,k_2,k_3;z)
={}&
\oint d^2z'\,G_{2,0}(z,z')\,
G_{2,+2}(z',\widehat{k}_1)\,
\bigl[k_1\cdot(k_2+k_3)\bigr]
2i\mathrm{PV}\frac{1}{l^{-}(z')\cdot(k_1+k_2+k_3)}
\nonumber \\
&\hspace{22mm}\times
\overline{\mathcal{H}}_{\mathrm{sum}}(z';k_2,k_3)
\nonumber \\[2pt]
&+
G_{2,0}(z,\widehat{k}_3)\,
\bigl[(k_1+k_2)\cdot k_3\bigr]
2i\mathrm{PV}\frac{1}{l^{-}(\widehat{k}_3)\cdot(k_1+k_2+k_3)}
\nonumber \\
&\hspace{22mm}\times
\mathcal{H}_{\mathrm{sum}}(\widehat{k}_3;k_1,k_2).
\label{app:eq:T3_IV}
\end{align}

The conformal-factor contribution is most compactly written for an arbitrary
branch:
\begin{align}
\mathcal{T}^{(\eta_2\eta_3)}_{3,\Omega}
(k_1,k_2,k_3;z)
={}&
\oint d^2z'\,G_{2,0}(z,z')\,
\eth'\bar{\eth}'\,
\widetilde{\delta\Omega}^{-}_{3,(\eta_2\eta_3)}
(k_1,k_2,k_3;z')
\nonumber \\
&\times
2i\mathrm{PV}\frac{1}{l^{-}(z')\cdot
K_{\eta_2\eta_3}(1,2,3)}.
\label{app:eq:T3_Omega_general}
\end{align}
Here $\widetilde{\delta\Omega}^{-}_{3,(\eta_2\eta_3)}$ denotes the
c-number Fourier coefficient multiplying the corresponding normal-ordered
operator branch of
\begin{equation}
8\delta\widehat{\Omega}_3^{-}
=
\widehat{P}_3^{-}
+
\int_r^\infty dr'\int_{r'}^\infty dr''\,
\left[
\partial_{r''}^2\widehat{\Lambda}_1^{-}\partial_{r''}^2\widehat{\Lambda}_2^{-\dagger}
+
\partial_{r''}^2\widehat{\Lambda}_2^{-}\partial_{r''}^2\widehat{\Lambda}_1^{-\dagger}
\right],
\label{app:eq:Omega3_connected}
\end{equation}
with
\begin{equation}
\widehat{P}_3^{-}
=
\partial_r\widehat{\Lambda}_1^{-}\partial_r\widehat{\Lambda}_2^{-\dagger}
+
\partial_r\widehat{\Lambda}_2^{-}\partial_r\widehat{\Lambda}_1^{-\dagger}.
\label{app:eq:P3_connected}
\end{equation}
Only the connected cubic part of Eq.~\eqref{app:eq:Omega3_connected} is
included in Eq.~\eqref{app:eq:T3_Omega_general}.

The conformal-factor and mixed angular-field contributions were derived
above.  The remaining term is generated by the linear metric perturbation.

\subsubsection{Contribution of the linear metric perturbation}
\label{app:subsec:h1_metric_contribution}
We define
\begin{equation}
\widehat{\mathcal{Q}}_3^{(h)}(x,z')
\equiv
\widehat{h}_1^{ab}(x)
\partial_a\widehat{\Lambda}_1(x,z')
\partial_b\widehat{\Lambda}_1^\dagger(x,z').
\label{app:eq:Q3h_definition}
\end{equation}
The negative-frequency metric coefficient introduced in the main text
can be written as
\begin{equation}
\widehat{h}_1^{ab}(k_1)
=
\sum_{\lambda_1=\pm}
\mathfrak{h}_{\lambda_1}^{ab}(1)
a_{\lambda_1}^{\mathrm{in}}(1),
\label{app:eq:h1_helicity_coefficient}
\end{equation}
where
\begin{equation}
\mathfrak{h}_{\lambda_1}^{ab}(1)
=
i\sqrt{\frac{G\omega_1}{\pi}}
\left[
\delta_{\lambda_1,-}
m_1^a m_1^b
+
\delta_{\lambda_1,+}
\bar{m}_1^a \bar{m}_1^b
\right],
\qquad
m_1^a\equiv m^a(\widehat{k}_1).
\label{app:eq:h1_polarization_coefficient}
\end{equation}
Its contraction with the two momenta appearing in the derivatives is
therefore
\begin{align}
\mathfrak{H}_{\lambda_1}^{(h)}(1;2,3)
&\equiv
\mathfrak{h}_{\lambda_1}^{ab}(1)
k_{2a}k_{3b}
\nonumber \\
&=
i\sqrt{\frac{G\omega_1}{\pi}}
\Big[
\delta_{\lambda_1,-}
(m_1\cdot k_2)(m_1\cdot k_3)
+
\delta_{\lambda_1,+}
(\bar{m}_1\cdot k_2)(\bar{m}_1\cdot k_3)
\Big].
\label{app:eq:Hh_metric_contraction}
\end{align}
For the two angular fields, let
$\mathscr{L}_{\eta,\lambda}(i;z')$ and
$\overline{\mathscr{L}}_{\eta,\lambda}(i;z')$ denote the
helicity-resolved coefficients of the branch with exponential
$e^{-i\eta k_i\cdot x}$, where $\eta=\pm1$. Explicitly,
\begin{align}
\mathscr{L}_{+,\lambda}(i;z')
&=
\delta_{\lambda,+}\, \delta^2(z'-\widehat{k}_i)
+
\delta_{\lambda,-}\, G_{2,+2}(z',\widehat{k}_i),
\label{app:eq:L_branch_plus}
\\
\mathscr{L}_{-,\lambda}(i;z')
&=
\delta_{\lambda,-}\, \delta^2(z'-\widehat{k}_i)
+
\delta_{\lambda,+}\, G_{2,+2}(z',\widehat{k}_i),
\label{app:eq:L_branch_minus}
\\
\overline{\mathscr{L}}_{+,\lambda}(i;z')
&=
\delta_{\lambda,-}\, \delta^2(z'-\widehat{k}_i)
+
\delta_{\lambda,+}\, G_{-2,-2}(z',\widehat{k}_i),
\label{app:eq:Lbar_branch_plus}
\\
\overline{\mathscr{L}}_{-,\lambda}(i;z')
&=
\delta_{\lambda,+}\, \delta^2(z'-\widehat{k}_i)
+
\delta_{\lambda,-}\, G_{-2,-2}(z',\widehat{k}_i).
\label{app:eq:Lbar_branch_minus}
\end{align}

Here the ultralocal Green's functions have been written explicitly as
\begin{equation}
G_{2,-2}(z',z_i)
=
G_{-2,+2}(z',z_i)
=
\delta^2(z'-z_i).
\label{app:eq:ultralocal_Green_functions}
\end{equation}
Thus the first term in each branch coefficient collapses the angular
integration, whereas the second term contains the nonlocal Green
function.

The relevant derivative coefficients are
\begin{align}
\partial_a\widehat{\Lambda}_1
&\longrightarrow
-i\eta_2 k_{2a}
\mathscr{L}_{\eta_2,\lambda_2}(2;z'),
\label{app:eq:partial_Lambda1_branch} \\
\partial_b\widehat{\Lambda}_1^\dagger
&\longrightarrow
-i\eta_3 k_{3b}
\overline{\mathscr{L}}_{\eta_3,\lambda_3}(3;z').
\label{app:eq:partial_Lambda1dagger_branch}
\end{align}
The coefficient of the branch therefore
\begin{equation}
K_{\eta_2\eta_3}^a(1,2,3)
=
k_1^a+\eta_2 k_2^a+\eta_3 k_3^a
\label{app:eq:K_eta_h_metric}
\end{equation}
in the metric part of the local cone source is
\begin{equation}
\boxed{
\mathcal{S}^{(h)}_{\eta_2\eta_3; \lambda_1\lambda_2\lambda_3}(z';1,2,3)
=
-\eta_2\eta_3
\mathfrak{H}_{\lambda_1}^{(h)}(1;2,3)
\mathscr{L}_{\eta_2,\lambda_2}(2;z')
\overline{\mathscr{L}}_{\eta_3,\lambda_3}(3;z').
}
\label{app:eq:S3h_branch_coefficient}
\end{equation}
The factors $G_{2,-2}$ and $G_{-2,+2}$ are the ultralocal
components, whereas $G_{2,+2}$ and $G_{-2,-2}$ are the nonlocal
components. Equivalently, writing
\begin{equation}
\mathscr{L}_{\eta,\lambda}
=
\mathscr{A}_{\eta,\lambda}
+
\mathscr{B}_{\eta,\lambda},
\qquad
\overline{\mathscr{L}}_{\eta,\lambda}
=
\overline{\mathscr{A}}_{\eta,\lambda}
+
\overline{\mathscr{B}}_{\eta,\lambda},
\label{app:eq:L_AB_metric_split}
\end{equation}
the product entering Eq.~\eqref{app:eq:S3h_branch_coefficient} contains
\begin{align}
\mathscr{L}_{\eta_2,\lambda_2}
\overline{\mathscr{L}}_{\eta_3,\lambda_3}
={}&
\mathscr{A}_{\eta_2,\lambda_2}
\overline{\mathscr{A}}_{\eta_3,\lambda_3}
+
\mathscr{A}_{\eta_2,\lambda_2}
\overline{\mathscr{B}}_{\eta_3,\lambda_3}
\nonumber \\
&+
\mathscr{B}_{\eta_2,\lambda_2}
\overline{\mathscr{A}}_{\eta_3,\lambda_3}
+
\mathscr{B}_{\eta_2,\lambda_2}
\overline{\mathscr{B}}_{\eta_3,\lambda_3}.
\label{app:eq:h_metric_four_AB_terms}
\end{align}
None of these four contributions is discarded.

After angular inversion and combination of the future and
antipodal-past affine integrals, the paired-cone metric kernel is
\begin{equation}
\boxed{
\begin{aligned}
\mathcal{T}^{(h)}_{3,(\eta_2\eta_3); \lambda_1\lambda_2\lambda_3}(1,2,3;z)
={}&
\oint d^2z'
G_{2,0}(z,z')
\mathcal{S}^{(h)}_{\eta_2\eta_3; \lambda_1\lambda_2\lambda_3}(z';1,2,3)
\\
&\times
2i\mathrm{PV}\frac{1}{l^{-}(z')\cdot K_{\eta_2\eta_3}(1,2,3)}.
\end{aligned}
}
\label{app:eq:T3_h_general}
\end{equation}

The positive-frequency sector, containing
$\widehat{h}_1^{ab\dagger}(k_1)$, is the Hermitian conjugate of
Eq.~\eqref{app:eq:T3_h_general}. It is not an independent kernel.
Normal ordering and Bose symmetrization are performed only after the
metric, mixed-angular-field, and conformal-factor contributions have
been combined into the complete branch kernel.
The complete paired-cone kernels are
\begin{align}
\mathcal{C}^{(3)}_{-+}
&=
\mathcal{T}^{(I)}_3
+
\mathcal{T}^{(-+)}_{3,\Omega}
+
\mathcal{T}^{(h)}_{3,(-+)},
\label{app:eq:C3_mp} \\
\mathcal{C}^{(3)}_{--}
&=
\mathcal{T}^{(II)}_3
+
\mathcal{T}^{(--)}_{3,\Omega}
+
\mathcal{T}^{(h)}_{3,(--)},
\label{app:eq:C3_mm} \\
\mathcal{C}^{(3)}_{+-}
&=
\mathcal{T}^{(III)}_3
+
\mathcal{T}^{(+-)}_{3,\Omega}
+
\mathcal{T}^{(h)}_{3,(+-)},
\label{app:eq:C3_pm} \\
\mathcal{C}^{(3)}_{++}
&=
\mathcal{T}^{(IV)}_3
+
\mathcal{T}^{(++)}_{3,\Omega}
+
\mathcal{T}^{(h)}_{3,(++)}.
\label{app:eq:C3_pp}
\end{align}
The helicity labels $\lambda_1,\lambda_2,\lambda_3$ are suppressed in
the last four equations. Equations~\eqref{app:eq:T3_I}--\eqref{app:eq:C3_pp}
contain all three parts of the paired-cone source and give the paired-cone
summand of the complete kernel $\mathcal{C}^{(3),\mathrm{tot}}_{\eta_2\eta_3}$
in Eq.~\eqref{eq:C3_total_three_terms}. The recursive-cut kernel must be
added before applying the helicity projector.

\subsection{Helicity extraction}
\label{app:subsec:C3_extraction}

For a positive-helicity outgoing graviton of momentum $q$, the reduced
vertex is
\begin{equation}
\mathcal{V}^{(3),+,\mathrm{geom},\mathrm{cone}}_{\eta_2\eta_3}(q;1,2,3)
=
-\left.
\eth_z^2
\left[
\bigl(l(z)\cdot\widehat{q}\bigr)
\mathcal{C}^{(3)}_{\eta_2\eta_3}(z;1,2,3)
\right]
\right|_{z=\widehat{q}}.
\label{app:eq:V3_plus}
\end{equation}
For negative helicity,
\begin{equation}
\mathcal{V}^{(3),-,\mathrm{geom},\mathrm{cone}}_{\eta_2\eta_3}(q;1,2,3)
=
-\left.
\bar{\eth}_z^2
\left[
\bigl(l(z)\cdot\widehat{q}\bigr)
\mathcal{C}^{(3)}_{\eta_2\eta_3}(z;1,2,3)
\right]
\right|_{z=\widehat{q}}.
\label{app:eq:V3_minus}
\end{equation}
The Cauchy--Klein--Gordon factor $(\omega_q+\Omega)/(2\omega_q)$ is not present in these equations. The outgoing phase has already been made radiative according to Eq.~\eqref{eq:radiative_map_three_terms}, so $q^0=|\vec{K}_{\eta_2\eta_3}|$ and the normalized radiative factor equals one. The complete vertex is obtained by adding the recursive-cut contribution before applying the angular operator.
The order of operations in Eqs.~\eqref{app:eq:V3_plus} and
\eqref{app:eq:V3_minus} is essential: the angular derivatives act first and
the result is evaluated at $z=\widehat{q}$ only afterwards. In the two outgoing
insertions of the $++\to++$ matrix element, the angular evaluations are
$z=\widehat{p}'_1$ and $z=\widehat{p}'_2$, respectively. These angular
operations have radial degree zero. Coincident or collinear directions are
understood distributionally, with the radiative operators smeared against
smooth wave packets.

\subsection{Connected outgoing operator}
\label{app:subsec:C3_out_operator}

With
\begin{equation}
d\mu_i=\frac{d^3k_i}{(2\pi)^3\,2\omega_i},
\label{app:eq:dmu}
\end{equation}
the connected cubic correction is
\begin{align}
\left.\delta a^{\mathrm{out}}_{3,\lambda}(q)\right|_{\mathrm{paired\ cone}}
={}&
\sum_{\lambda_1,\lambda_2,\lambda_3}
\sum_{\eta_2,\eta_3=\pm1}
\int d\mu_1\,d\mu_2\,d\mu_3\,(2\pi)^3
\delta^{(3)}\!\left(
\vec{q}-\vec{k}_1-\eta_2\vec{k}_2-\eta_3\vec{k}_3
\right)
\nonumber \\
&\times
\mathcal{W}^{(3),\lambda,\mathrm{cone}}_{\eta_2\eta_3}(q;1,2,3)
\widehat{\mathcal{O}}_{\eta_2\eta_3}(1,2,3).
\label{app:eq:deltaa3_connected}
\end{align}
The mode expansions and canonical commutator are
\begin{align}
\sigma^-(v,z)
&=
\int_0^\infty\frac{d\omega}{2\pi}
\sqrt{\frac{4\pi G}{\omega}}
\left[
 a^{\mathrm{in}}_+(\omega,\widehat{z})e^{+i\omega v}
+a^{\mathrm{in}\dagger}_-(\omega,\widehat{z})e^{-i\omega v}
\right],
\label{app:eq:sigma_minus_modes}
\\
\bar{\sigma}^-(v,z)
&=
\int_0^\infty\frac{d\omega}{2\pi}
\sqrt{\frac{4\pi G}{\omega}}
\left[
 a^{\mathrm{in}}_-(\omega,\widehat{z})e^{+i\omega v}
+a^{\mathrm{in}\dagger}_+(\omega,\widehat{z})e^{-i\omega v}
\right],
\label{app:eq:sigmabar_minus_modes}
\\
\left[
 a^{\mathrm{in}}_\lambda(k),
 a^{\mathrm{in}\dagger}_{\lambda'}(k')
\right]
&=
\omega_k\,
\delta_{\lambda\lambda'}\,
\delta^{(3)}(\vec{k}-\vec{k}').
\label{app:eq:CCR}
\end{align}
Using Eq.~\eqref{app:eq:dmu}, each stripped shear mode obeys the exact
measure conversion
\begin{equation}
d^2z\,\frac{d\omega}{2\pi}
\sqrt{\frac{4\pi G}{\omega}}
=
d\mu(k)\,\nu(\omega),
\qquad
\nu(\omega)=8\pi^2\sqrt{4\pi G}\,\omega^{-3/2}.
\label{app:eq:mode_measure_conversion}
\end{equation}
The kernels $\mathcal T_3$ above are reduced kernels.  The full vertex
$\mathcal W^{(3)}=\mathcal N^{(3)}\mathcal V^{(3),\mathrm{geom}}$
includes the mode factors stripped in passing to $d\mu_i$.  Fixed factors
associated with the external leg, including those displayed explicitly in the
linear metric coefficient, are absorbed into $\mathcal N_{\rm ext}$ and do
not affect the radial degree.  In the ordered one-loop contraction the two
internal legs contribute
$\nu(\omega_k)\nu(\omega_r)=\mathcal O(K^{-3})$ at each vertex.  These
factors must be included once in $\mathcal W^{(3)}$ and must not be counted
again inside the reduced kernel.

\subsection{The connected one-loop contribution to $++\to++$}
\label{app:subsec:C3_one_loop}

In the helicity convention used in this work, the relevant matrix element is
\begin{equation}
\mathcal{A}^{(33)}_{++\to++}
=
\left\langle0\right|
\delta a^{\mathrm{out}}_{3,+}(p'_1)
\delta a^{\mathrm{out}}_{3,+}(p'_2)
 a^{\mathrm{in}\dagger}_{+}(p_1)
 a^{\mathrm{in}\dagger}_{+}(p_2)
\left|0\right\rangle_{\mathrm{conn}}.
\label{app:eq:A33_pppp}
\end{equation}
For this contraction argument we suppress the helicity and ``in'' labels and
write $a_i\equiv a^{\mathrm{in}}_{\lambda_i}(k_i)$. Because each cubic
monomial is already normal ordered, the connected vacuum expectation value
in the displayed operator order is supplied by
\begin{equation}
\widehat{\mathcal{O}}_{++}(1,2,3)\,
\widehat{\mathcal{O}}_{--}(1',2',3')
=
 a_1 a_2 a_3\,
 {a'}_2^\dagger {a'}_3^\dagger {a'}_1,
\label{app:eq:surviving_ordered_pair}
\end{equation}
together with the exchange $p'_1\leftrightarrow p'_2$. The mixed pairs
$\widehat{\mathcal{O}}_{+-}\widehat{\mathcal{O}}_{+-}$,
$\widehat{\mathcal{O}}_{+-}\widehat{\mathcal{O}}_{-+}$ and
$\widehat{\mathcal{O}}_{-+}\widehat{\mathcal{O}}_{-+}$ have the correct net number of creation
operators but vanish in the fixed Wightman ordering of
Eq.~\eqref{app:eq:A33_pppp}: after the right-hand mixed vertex acts on the
two-particle state, only one quantum remains, which is insufficient for the
two annihilators of the left-hand mixed vertex.

A connected contraction fixes one annihilator in
$\widehat{\mathcal{O}}_{++}$ and the annihilator in $\widehat{\mathcal{O}}_{--}$ to the two
incoming momenta. The remaining two annihilators in $\widehat{\mathcal{O}}_{++}$ are
contracted with the two creators in $\widehat{\mathcal{O}}_{--}$. Up to helicity and
momentum permutations, the routing can therefore be written as
\begin{equation}
k_1=p_i,
\qquad
k'_1=p_j,
\qquad
k_2=k'_2=k,
\qquad
k_3=k'_3=r,
\qquad
\{i,j\}=\{1,2\}.
\label{app:eq:loop_routing}
\end{equation}
The two spatial delta functions contained in the outgoing operators become
\begin{align}
\Delta_1
&=
(2\pi)^3\delta^{(3)}
\left(\vec{p}'_1-\vec{p}_i-\vec{k}-\vec{r}\right),
\label{app:eq:Delta1_loop}
\\
\Delta_2
&=
(2\pi)^3\delta^{(3)}
\left(\vec{p}'_2-\vec{p}_j+\vec{k}+\vec{r}\right).
\label{app:eq:Delta2_loop}
\end{align}
Their product is equivalently
\begin{align}
\Delta_1\Delta_2
={}&
(2\pi)^3\delta^{(3)}
\left(
\vec{p}'_1+\vec{p}'_2-\vec{p}_1-\vec{p}_2
\right)
\nonumber \\
&\times
(2\pi)^3\delta^{(3)}
\left(
\vec{P}-\vec{k}-\vec{r}
\right),
\qquad
\vec{P}\equiv\vec{p}'_1-\vec{p}_i
=\vec{p}_j-\vec{p}'_2.
\label{app:eq:two_vertex_deltas_reduced}
\end{align}
Using the second delta to set $\vec{r}=\vec{P}-\vec{k}$, one obtains
\begin{align}
\mathcal{A}^{(33)}_{++\to++}
={}&
(2\pi)^3\delta^{(3)}
\left(
\vec{p}'_1+\vec{p}'_2-\vec{p}_1-\vec{p}_2
\right)
\mathcal{M}^{(33)}_{++\to++},
\label{app:eq:A33_global_delta}
\\
\mathcal{M}^{(33)}_{++\to++}
={}&
\sum_{\substack{i,j=1,2\\ i\neq j}}
\sum_{\lambda_k,\lambda_r}
\int
\frac{d^3k}
{(2\pi)^3\,2\omega_k\,2\omega_{P-k}}
\nonumber \\
&\times
\mathcal{W}^{(3),+}_{++}
\left(p'_1;p_i,k,P-k\right)
\mathcal{W}^{(3),+}_{--}
\left(p'_2;p_j,k,P-k\right)
+
(p'_1\leftrightarrow p'_2).
\label{app:eq:M33_single_loop}
\end{align}
The unfixed three-momentum $\vec{k}$ in
Eq.~\eqref{app:eq:M33_single_loop} proves that the connected
$\delta a_3\delta a_3$ sector contains a genuine one-loop contribution.
No tree-level helicity-selection rule is used in this argument.

\subsection{Radial limits to be applied to the explicit kernels}
\label{app:subsec:C3_radial_limits}

For ultraviolet power counting, write
\begin{equation}
\vec{k}=K\widehat{k},\qquad K\equiv|\vec{k}|,
\qquad
\vec{r}=\vec{P}-K\widehat{k},
\qquad
K\longrightarrow\infty.
\label{app:eq:UV_assignment}
\end{equation}
Then
\begin{equation}
\frac{d^3k}{2\omega_k\,2\omega_r}
\sim
\frac{K^2 dK\,d^2\widehat{k}}{4K^2}
=
\frac14\,dK\,d^2\widehat{k}.
\label{app:eq:UV_measure}
\end{equation}
The branch energies relevant to Eq.~\eqref{app:eq:M33_single_loop} obey
\begin{align}
\Omega_{++}(p_i,k,r)
&=
\omega_{p_i}+\omega_k+\omega_r
=2K+\mathcal{O}(1),
\label{app:eq:Omega_pp_UV}
\\
\Omega_{--}(p_j,k,r)
&=
\omega_{p_j}-\omega_k-\omega_r
=-2K+\mathcal{O}(1).
\label{app:eq:Omega_mm_UV}
\end{align}
The radiative projection has radial degree zero: no additional factor
$(\omega_q+\Omega)/(2\omega_q)$ is present.  Direct insertion of
Eq.~\eqref{app:eq:UV_assignment} into the reduced geometric kernels gives
\begin{equation}
\mathcal V^{(3),\mathrm{geom}}=\mathcal O(K).
\label{app:eq:V3_geom_degree}
\end{equation}
Equation~\eqref{app:eq:mode_measure_conversion} supplies
$\nu(\omega_k)\nu(\omega_r)=\mathcal O(K^{-3})$ at each vertex, and
hence
\begin{equation}
\mathcal W^{(3)}=\mathcal O(K^{-2}),
\qquad
\mathcal W_L^{(3)}\mathcal W_R^{(3)}=\mathcal O(K^{-4}).
\label{app:eq:W3_full_degree}
\end{equation}
Together with Eq.~\eqref{app:eq:UV_measure}, this yields the absolutely
convergent radial bound
\begin{equation}
\int^\infty dK\,K^{-4}<\infty.
\label{app:eq:UV_convergent_bound}
\end{equation}
Thus the connected plane-wave one-loop contribution is radially ultraviolet
finite.  Smearing remains part of the physical state definition and controls
separate celestial-distribution effects, but it is not required for this
radial ultraviolet bound.

For the infrared endpoints there are two distinct limits,
\begin{align}
\vec{k}=\rho\widehat{k},
\quad \rho\to0,
\quad \vec{r}\to\vec{P},
\label{app:eq:IR_k_soft}
\\
\vec{r}=\rho\widehat{r},
\quad \rho\to0,
\quad \vec{k}\to\vec{P}.
\label{app:eq:IR_r_soft}
\end{align}
Near either endpoint the phase-space measure behaves as
\begin{equation}
\frac{d^3k}{2\omega_k\,2\omega_r}
\sim
\mathrm{const.}\,\rho\,d\rho\,d^2\widehat{\rho}.
\label{app:eq:IR_measure}
\end{equation}
Thus, if the product of the two fully normalized vertices behaves as
$\rho^{-\gamma}$, the soft radial integral is
\begin{equation}
\int_0 d\rho\,\rho^{1-\gamma},
\label{app:eq:IR_criterion_integral}
\end{equation}
and is infrared convergent precisely when
\begin{equation}
\gamma<2.
\label{app:eq:IR_criterion}
\end{equation}
Equations~\eqref{app:eq:UV_assignment}--\eqref{app:eq:UV_convergent_bound} establish the ultraviolet radial bound, while Eqs.~\eqref{app:eq:IR_k_soft}--\eqref{app:eq:IR_criterion} give the separate channel-by-channel infrared test. Angular coincidence singularities are separate
celestial-distribution effects and are controlled by the smearing of the
radiative operators; they are not radial ultraviolet or infrared
divergences.


\end{document}